\begin{document}

\title{Octupolar Tensors for Liquid Crystals}
\author{Yannan Chen\footnote{%
    School of Mathematics and Statistics, Zhengzhou University, Zhengzhou 450001, China ({\tt ynchen@zzu.edu.cn}).
    This author was supported by the National Natural Science Foundation of China (Grant No. 11401539),
    the Development Foundation for Excellent Youth Scholars of Zhengzhou University (Grant No. 1421315070),
    and the Hong Kong Polytechnic University Postdoctoral Fellowship.}
  \and Liqun Qi\footnote{%
    Department of Applied Mathematics, The Hong Kong Polytechnic University,
    Hung Hom, Kowloon, Hong Kong ({\tt maqilq@polyu.edu.hk}).
    This author's work was partially supported by the Hong Kong Research Grant Council
    (Grant No. PolyU  501913, 15302114, 15300715 and 15301716).}
  \and Epifanio G. Virga\footnote{Mathematical Institute, University of Oxford, UK ({\tt virga@maths.ox.ac.uk}), \emph{on leave from} Dipartimento di Matematica, Universit\`{a} di Pavia,
    via Ferrata 5, I-27100 Pavia, Italy ({\tt eg.virga@unipv.it}). The work of this author was done while he was visiting the Oxford Centre for Nonlinear PDE at the University of Oxford, whose kind hospitality he gratefully acknowledges.}
}
\date{\today}
\maketitle

\begin{abstract}
  A third-order three-dimensional symmetric traceless tensor, called the \emph{octupolar} tensor, has been introduced to study tetrahedratic nematic phases in liquid crystals.
  The octupolar \emph{potential}, a scalar-valued function generated on the unit sphere by that tensor,
  should ideally have four maxima capturing the most probable molecular orientations (on the vertices of a tetrahedron), but it was recently found to possess an equally generic variant with \emph{three} maxima instead of four.
  %By a judicious choice of the reference frame,
  %the octupolar tensor could be represented by three independent parameters.
  %In the space of these three parameters, it was shown that
  It was  also shown that the irreducible admissible region for the octupolar tensor in a three-dimensional parameter space is bounded by a dome-shaped surface, beneath which is
  a \emph{separatrix} surface connecting the two generic octupolar states.
  The latter surface, which was obtained through numerical continuation,  may be physically interpreted as marking a possible \emph{intra-octupolar} transition.
  In this paper, by using the resultant theory of algebraic geometry and
  the E-characteristic polynomial of spectral theory of tensors,
  we give a closed-form, algebraic expression for both the dome-shaped surface and the separatrix surface.
  This  turns the envisaged intra-octupolar transition into a quantitative, possibly observable prediction.
  Some other properties of octupolar tensors are also studied.

  \textbf{Key words.} Octupolar order tensors; Resultants; Traceless tensors;  Liquid crystas; Intra-octupolar transitions.
\end{abstract}

\newtheorem{Theorem}{Theorem}[section]
\newtheorem{Definition}[Theorem]{Definition}
\newtheorem{Lemma}[Theorem]{Lemma}
\newtheorem{Corollary}[Theorem]{Corollary}
\newtheorem{Proposition}[Theorem]{Proposition}

% LaTeX definitions
\renewcommand{\hat}[1]{\widehat{#1}}
\renewcommand{\tilde}[1]{\widetilde{#1}}
\renewcommand{\bar}[1]{\overline{#1}}
\newcommand{\REAL}{\mathbb{R}}
\newcommand{\COMPLEX}{\mathbb{C}}
\newcommand{\SPHERE}{\mathbb{S}^2}
\newcommand{\diff}{\,\mathrm{d}}
\newcommand{\st}{\mathrm{s.t.}}
\newcommand{\T}{\top}
\newcommand{\vt}[1]{{\bf #1}}%{\bm{#1}}
\newcommand{\x}{\vt{x}}
\newcommand{\y}{\vt{y}}
\newcommand{\z}{\vt{z}}
\newcommand{\Ten}{\mathcal{T}}
\newcommand{\A}{\mathcal{A}}
\newcommand{\RESULTANT}{\mathrm{Res}}

\newpage
\section{Introduction}

Liquid crystals are well-known for visualization applications in flat panel electronic displays.
But beyond that, various optical and electronic devices, such as laser printers,
light-emitting diodes, field-effect transistors, and holographic data storage,
were invented with the development of bent-core (banana-shaped) liquid crystals \cite{Jak-13,LBS-07}. A third-order three dimensional symmetric traceless tensor was introduced in \cite{Fel-95}
to characterize condensed phases exhibited by bent-core molecules \cite{NSW-96,LNS-97}.
Based on such a tensor, the orientationally ordered octupolar (tetrahedratic) phase
has been both predicted theoretically \cite{LuR-02,BPC-02} and confirmed experimentally \cite{WNS-08}.
After that, the octupolar order parameters of liquid crystals have been widely studied \cite{BrP-10,Tur-11,GaV-16}. Generalized liquid crystal phases are also considered in \cite{Nissinen-16,Liu-16}, which feature octupolar order tensors among so many others.

Ideally, the octupolar order in a molecular ensemble of generalized liquid crystals is expected to exhibit four directions in space pointing towards the vertices of a tetrahedron, along which specific molecular axes are most likely to be oriented.
Fel \cite{Fel-95} first proposed to use a third-order three dimensional symmetric tensor $\A$,
which here is called the \emph{octupolar tensor}, to describe the tetrahedratic symmetry of the octupolar order.
According to the Buckingham's formula \cite{Buc-67,Tur-11} for the expansion in Cartesian tensors of the orientational probability density function for uniaxial nematics,
the octupolar tensor $\A$ is traceless, i.e., the trace of each slice matrix of $\A$ vanishes. As shown in \cite{GaV-16},
by a judicious choice of the Cartesian coordinate system,
such a tensor can be represented by three independent parameters, which there were called $\alpha_0,\beta_3$ and $\alpha_2$.

%\begin{equation}\label{A-orig.}
%    \A=\left[
%         \begin{array}{ccc|ccc|ccc}
%           a_{111} & a_{112} & a_{113} & a_{112} & a_{122} & a_{123} & a_{113} & a_{123} & a_{133} \\
%           a_{112} & a_{122} & a_{123} & a_{122} & a_{222} & a_{223} & a_{123} & a_{223} & a_{233} \\
%           a_{113} & a_{123} & a_{133} & a_{123} & a_{223} & a_{233} & a_{133} & a_{233} & a_{333} \\
%         \end{array}
%       \right]\in\REAL^{[3,3]}
%\end{equation}
% where $\REAL^{[m,n]}$ means the real-valued space of $m$th order $n$ dimensional symmetric tensors.

%\begin{equation}\label{traceless}
%\left\{\begin{aligned}
%  a_{111}+a_{122}+a_{133} = 0, \\
%  a_{112}+a_{222}+a_{233} = 0, \\
%  a_{113}+a_{223}+a_{333} = 0. \\
%\end{aligned}\right.
%\end{equation}

Virga \cite{Vi-15} and Gaeta and Virga \cite{GaV-16}, in their studies of
third-order octupolar tensors in two and three space dimensions, respectively, also introduced
the \emph{octupolar potential}, a scalar-valued function on the unit sphere obtained from the octupolar tensor.
In particular, Gaeta and Virga \cite{GaV-16} showed that the irreducible admissible region for
the octupolar potential is bounded by a surface in the three-dimensional parameter space which has the form of a \emph{dome} and, more importantly, that  there are indeed \emph{two} generic octupolar states,
divided by a \emph{separatrix} surface in paramter space. Physically,
the latter surface was interpreted as representing a possible intra-octupolar transition.

In this paper, by using the resultant theory of algebraic geometry
and the E-characteristic polynomial of the spectral theory of tensors,
we give a closed-form, algebraic expression for both the dome and the separatrix.
This turns the intra-octupolar transition envisioned in \cite{GaV-16} into a quantitative, possibly observable prediction.
Some other properties of octupolar tensors are also studied.

In Section~\ref{sec:2}, we prove that the traceless property of octupolar tensors
is preserved under orthogonal transformations. This property motives us to choose
a proper Cartesian coordinate system to reduce the independent elements of
the octupolar tensor $\A$ from seven to three.
By assuming that the North pole $(0,0,1)^\T$ is a maximum point of
the octupolar potential on the unit sphere,
we identify an irreducible, admissible region for only three independent parameters  of $\A$, which we shall also call
$\alpha_0,\beta_3 $ and $\alpha_2$ to ease comparison with \cite{GaV-16}.

Using the multi-polynomial resultant from algebraic geometry \cite{CLO-04},
we derive the resultant $\RESULTANT(\A\x^2)$ in Section~\ref{sec:3}. $\RESULTANT(\A\x^2)=0$
implies that $\A$ has zero E-eigenvalues. After that, we construct the E-characteristic polynomial
$\phi_{\A}(\lambda)$ of $\A$;  this latter is
a fourteen-degree polynomial with only even degree terms and its constant term is
the square of $\RESULTANT(\A\x^2)$. When $\RESULTANT(\A\x^2)\neq0$,
$\phi_{\A}(\lambda)=0$ if and only if $\lambda$ is an E-eigenvalue of $\A$.

In Section~\ref{sec:4}, by assuming that the North pole $(0,0,1)^\T$ is
the global maximum point of the octupolar potential on the unit sphere,
the admissible region is further reduced.
The boundary of such a reduced admissible region was referred to as the \emph{dome} in \cite{GaV-16}.
By exploring the E-characteristic polynomial $\phi_{\A}(\lambda)$ of $\A$,
we give the algebraic expression for the dome explicitly.

We said already that  two generic octupolar states were identified in  in \cite{GaV-16}
on the two sides of a separatrix surface in parameter space. It should be added here that in  in \cite{GaV-16} both the dome and the separatix were determined by numerical continuation: no closed-form was given for either of them.
In Section~\ref{sec:5}, we also obtain the explicit, algebraic expression for the separatrix.

Finally, some concluding remarks are drawn in Section~\ref{sec:6}.

\section{Preliminaries}\label{sec:2}

The octupolar tensor in liquid crystals is a traceless tensor.
Now, we give the formal definition of traceless tensors and
prove that the traceless property of a symmetric tensor
are invariant under orthogonal transformations \cite{GaV-16}.
For convenience, we denote as $\REAL^{[m,n]}$ the real-valued space of
$m$th order $n$ dimensional symmetric tensors.

\begin{Definition}
  Let $\Ten=[t_{i_1i_2\cdots i_m}]\in\REAL^{[m,n]}$. If
  \begin{equation*}
    \sum_{i=1}^n t_{iii_3\cdots i_m}=0 \qquad\text{ for all } i_3,\ldots,i_m=1,2,\ldots,n,
  \end{equation*}
  then $\Ten$ is called a traceless tensor.
\end{Definition}

\begin{Theorem}
  Let $\Ten=[t_{i_1i_2\cdots i_m}]\in\REAL^{[m,n]}$ be a traceless tensor and
  $Q=[q_{ij}]\in\REAL^{n\times n}$ be an orthogonal matrix.
  We denote by $\Ten Q^m \in\REAL^{[m,n]}$ the symmetric tensor whose elements are
  \begin{equation*}
    [\Ten Q^m]_{i_1i_2\cdots i_m} = \sum_{j_1=1}^n\sum_{j_2=1}^n\cdots\sum_{j_m=1}^n
      t_{j_1j_2\cdots j_m}q_{i_1j_1}q_{i_2j_2}\cdots q_{i_mj_m}.
  \end{equation*}
  Then, $\Ten Q^m$ is also a traceless tensor.
\end{Theorem}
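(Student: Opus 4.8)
The plan is to verify the defining identity of a traceless tensor directly for $\Ten Q^m$, that is, to show that $\sum_{i=1}^n [\Ten Q^m]_{ii i_3 \cdots i_m} = 0$ for every choice of $i_3, \ldots, i_m$. The entire argument rests on combining two ingredients: the orthogonality of $Q$, which contracts the two repeated indices into a Kronecker delta, and the hypothesis that $\Ten$ itself is traceless. No clever idea is needed; the work is careful index bookkeeping.

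First I would substitute the definition of $[\Ten Q^m]$, setting the first two indices equal, to obtain
\[
  \sum_{i=1}^n [\Ten Q^m]_{ii i_3 \cdots i_m}
  = \sum_{i=1}^n \sum_{j_1=1}^n \cdots \sum_{j_m=1}^n
    t_{j_1 j_2 j_3 \cdots j_m}\, q_{i j_1} q_{i j_2} q_{i_3 j_3} \cdots q_{i_m j_m}.
\]
The key step is to interchange these finite sums so that the summation over $i$ is performed first, isolating the factor $\sum_{i=1}^n q_{i j_1} q_{i j_2}$. Since $Q$ is orthogonal, its columns are orthonormal, so $\sum_{i=1}^n q_{i j_1} q_{i j_2} = (Q^\T Q)_{j_1 j_2} = \delta_{j_1 j_2}$.

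After collapsing this Kronecker delta, the remaining expression reads $\sum_{j_1, j_3, \ldots, j_m} t_{j_1 j_1 j_3 \cdots j_m}\, q_{i_3 j_3} \cdots q_{i_m j_m}$. Here I would regroup the sum, holding $j_3, \ldots, j_m$ fixed and summing over $j_1$ innermost; the inner sum $\sum_{j_1=1}^n t_{j_1 j_1 j_3 \cdots j_m}$ vanishes by the traceless hypothesis on $\Ten$. Hence every term of the outer sum is zero, and the trace of $\Ten Q^m$ over its first two slots vanishes. Because $\Ten Q^m$ is symmetric (inherited from the symmetry of $\Ten$), tracing over the first two indices is equivalent to tracing over any pair, so this single computation establishes the full traceless condition and yields $\Ten Q^m \in \REAL^{[m,n]}$ traceless.

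I expect no genuine obstacle: the only points requiring care are the legitimacy of reordering the (finite) summations and reading off the orthogonality relation $\sum_i q_{i j_1} q_{i j_2} = \delta_{j_1 j_2}$ from $Q^\T Q = I$ (the column relation) rather than from $Q Q^\T = I$. It is precisely this contraction of the \emph{summed} index $i$ against the two factors $q_{i j_1}$ and $q_{i j_2}$ that makes orthogonality the right hypothesis, and that explains why the same statement would fail for a general invertible $Q$.
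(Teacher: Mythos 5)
Your proof is correct and follows essentially the same route as the paper's: both regroup the finite sums so that the two $q$-factors carrying the repeated summation index are contracted first, after which the tracelessness of $\Ten$ kills every remaining term. The only cosmetic difference is that you justify the contraction $\sum_{i} q_{ij_1}q_{ij_2}=\delta_{j_1j_2}$ directly from $Q^\T Q=I$, whereas the paper phrases the very same step as invariance of the trace of each slice matrix under an orthogonal similarity transformation.
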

\begin{proof}
  It is straightforward to see that the new tensor $\Ten Q^m$ is real-valued and symmetric.
  Now, we consider its slice matrices.
  As the sum of all matrix eigenvalues, the trace of a symmetric matrix is invariant
  under an orthogonal transformation. Hence, we get
  \begin{equation}\label{0-trace-0}
    \sum_{i=1}^n\sum_{j_1=1}^m\sum_{j_2=1}^m t_{j_1j_2j_3\cdots j_m}q_{ij_1}q_{ij_2}
      = \sum_{i=1}^n t_{iij_3\cdots j_m} = 0
  \end{equation}
  for all $j_3,\ldots,j_m=1,2,\ldots,n$. By some calculations,
  \begin{eqnarray*}
    \sum_{i=1}^n [\Ten Q^m]_{iii_3\cdots i_m}
      &=& \sum_{i=1}^n \sum_{j_1=1}^n\sum_{j_2=1}^n\sum_{j_3=1}^n\cdots\sum_{j_m=1}^n
            t_{j_1j_2j_3\cdots j_m}q_{ij_1}q_{ij_2}q_{i_3j_3}\cdots q_{i_mj_m} \\
      &=& \sum_{j_3=1}^n\cdots\sum_{j_m=1}^n \left(\sum_{i=1}^n \sum_{j_1=1}^n\sum_{j_2=1}^n
            t_{j_1j_2j_3\cdots j_m}q_{ij_1}q_{ij_2}\right) q_{i_3j_3}\cdots q_{i_mj_m} \\
      &=& 0,
  \end{eqnarray*}
  where the last equality is valid because of \eqref{0-trace-0}.
  Hence, the new tensor $\Ten Q^m$ is also traceless.
\end{proof}

%\begin{equation*}
%\begin{array}{ccc}
%                      & \alpha_0 = a_{123}, & \\
%  \alpha_1 = a_{111}, & \alpha_2 = a_{222}, & \alpha_3 = a_{333}, \\
%  \beta_1  = a_{122}, & \beta_2  = a_{233}, & \beta_3 = a_{113}.  \\
%\end{array}
%\end{equation*}

By exploting the rotational invariance of traceless tensors,
we now represent the octupolar tensor as
\begin{equation}\label{A-orig.}
    \A=\left[
         \begin{array}{ccc|ccc|ccc}
           a_{111} & a_{112} & a_{113} & a_{112} & a_{122} & a_{123} & a_{113} & a_{123} & a_{133} \\
           a_{112} & a_{122} & a_{123} & a_{122} & a_{222} & a_{223} & a_{123} & a_{223} & a_{233} \\
           a_{113} & a_{123} & a_{133} & a_{123} & a_{223} & a_{233} & a_{133} & a_{233} & a_{333} \\
         \end{array}
       \right]\in\REAL^{[3,3]}
\end{equation}
by choosing a proper Cartesian coordinate system.
The traceless property of $\A$ means that
\begin{equation}\label{traceless}
\left\{\begin{aligned}
  a_{111}+a_{122}+a_{133} = 0, \\
  a_{112}+a_{222}+a_{233} = 0, \\
  a_{113}+a_{223}+a_{333} = 0. \\
\end{aligned}\right.
\end{equation}
Hence, there are seven independent elements in $\A$.
Let $$\alpha_0 = a_{123}, ~\alpha_1 = a_{111}, ~\alpha_2 = a_{222}, ~\alpha_3 = a_{333},
~\beta_1  = a_{122}, ~\beta_2  = a_{233}, ~\beta_3 = a_{113}.$$
Using the traceless property \eqref{traceless}, we convert
\eqref{A-orig.} into
\begin{eqnarray*}%\label{A-bas.}
  \lefteqn{ \A = } \\
  && \footnotesize\left[
         \begin{array}{ccc|ccc|ccc}
           \alpha_1 & -\alpha_2-\beta_2 & \beta_3 & -\alpha_2-\beta_2 & \beta_1 & \alpha_0 & \beta_3 & \alpha_0 & -\alpha_1-\beta_1 \\
           -\alpha_2-\beta_2 & \beta_1 & \alpha_0 & \beta_1 & \alpha_2 & -\alpha_3-\beta_3 & \alpha_0 & -\alpha_3-\beta_3 & \beta_2 \\
           \beta_3 & \alpha_0 & -\alpha_1-\beta_1 & \alpha_0 & -\alpha_3-\beta_3 & \beta_2 & -\alpha_1-\beta_1 & \beta_2 & \alpha_3 \\
         \end{array}
    \right].
\end{eqnarray*}
The associated octupolar potential as defined in \cite{GaV-16} is
\begin{eqnarray}\label{poten}
    \Phi(\x) &\equiv& \A\x^3 = \sum_{i=1}^3\sum_{j=1}^3\sum_{k=1}^3 a_{ijk}x_ix_jx_k \\
      &=& \alpha_1x_1^3 + \alpha_2x_2^3 + \alpha_3x_3^3 + 6\alpha_0x_1x_2x_3
         + 3\beta_1x_1x_2^2 + 3\beta_2x_2x_3^2 + 3\beta_3x_1^2x_3 \nonumber\\
      &&{} - 3(\alpha_1+\beta_1)x_1x_3^2 - 3(\alpha_2+\beta_2)x_1^2x_2 - 3(\alpha_3+\beta_3)x_2^2x_3. \nonumber
\end{eqnarray}
On the unit sphere $\SPHERE\equiv\{\x=(x_1,x_2,x_3)^\T : x_1^2+x_2^2+x_3^2=1\}$,
the polynomial $\Phi(\x)$ has at least one maximum point.
Without loss of generality, we assume such a  maximum point being the North pole $(0,0,1)^\T$ and
\begin{equation*}
    \alpha_3 = \Phi(0,0,1) \geq 0.
\end{equation*}

From the spectral theory of tensors \cite{Qi-05}, we know that $\alpha_3$ is a Z-eigenvalue of $\A$
with an associated Z-eigenvector $(0,0,1)^\T$. All Z-eigenvectors and Z-eigenvalues must satisfy the following system:
\begin{equation*}%\label{Z-system}
\left\{\begin{aligned}
    & \alpha_1x_1^2 + 2\alpha_0x_2x_3 + \beta_1x_2^2 + 2\beta_3x_1x_3
      - (\alpha_1+\beta_1)x_3^2 - 2(\alpha_2+\beta_2)x_1x_2 = \lambda x_1, \\
    & \alpha_2x_2^2 + 2\alpha_0x_1x_3 + 2\beta_1x_1x_2 + \beta_2x_3^2
      - (\alpha_2+\beta_2)x_1^2 - 2(\alpha_3+\beta_3)x_2x_3 = \lambda x_2, \\
    & \alpha_3x_3^2 + 2\alpha_0x_1x_2 + 2\beta_2x_2x_3 + \beta_3x_1^2
      - 2(\alpha_1+\beta_1)x_1x_3 - (\alpha_3+\beta_3)x_2^2 = \lambda x_3, \\
    & x_1^2 + x_2^2 + x_3^2 = 1.
\end{aligned}\right.
\end{equation*}
Hence, requiring $(0,0,1)^\T$ to be a solution, we obtain
\begin{equation*}
    \alpha_1+\beta_1=0 \qquad\text{ and }\qquad \beta_2=0.
\end{equation*}
Moreover, because $\Phi(-x_1,0,0)=-\Phi(x_1,0,0)$, we could rotate the Cartesian coordinate system
so that $\Phi(1,0,0)=0$ and we get
\begin{equation*}
    \alpha_1=0.
\end{equation*}
Now, the octupolar tensor in \eqref{A-orig.} reduces to
\begin{equation*}%\label{A-4ele.}%\footnotesize
    \A = \left[
         \begin{array}{ccc|ccc|ccc}
           0 & -\alpha_2 & \beta_3 & -\alpha_2 & 0 & \alpha_0 & \beta_3 & \alpha_0 & 0 \\
           -\alpha_2 & 0 & \alpha_0 & 0 & \alpha_2 & -\alpha_3-\beta_3 & \alpha_0 & -\alpha_3-\beta_3 & 0 \\
           \beta_3 & \alpha_0 & 0 & \alpha_0 & -\alpha_3-\beta_3 & 0 & 0 & 0 & \alpha_3 \\
         \end{array}
    \right],
\end{equation*}
which features four independent elements, namely, $\alpha_0$, $\alpha_2$, $\alpha_3$, and $\beta_3$.
Correspondingly, the octupolar potential \eqref{poten} is
\begin{equation*}%\label{poten-reduced}
    \Phi(\x;\alpha_0,\alpha_2,\alpha_3,\beta_3) = \alpha_2x_2^3 + \alpha_3x_3^3 + 6\alpha_0x_1x_2x_3
         + 3\beta_3x_1^2x_3 - 3\alpha_2x_1^2x_2 - 3(\alpha_3+\beta_3)x_2^2x_3
\end{equation*}
for all $\x\in\SPHERE$. Without loss of generality, we can assume
\begin{equation*}
    \alpha_2 \geq 0
\end{equation*}
as a consequence of the following proposition.

\begin{Proposition}\label{alpha2-sym}
  For the octupolar potential \eqref{poten-reduced}, we have
  \begin{equation*}
    \Phi(x_1,x_2,x_3;\alpha_0,\alpha_2,\alpha_3,\beta_3)
      = \Phi(x_1,-x_2,x_3;-\alpha_0,-\alpha_2,\alpha_3,\beta_3).
  \end{equation*}
\end{Proposition}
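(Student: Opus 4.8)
The plan is to prove the identity by direct substitution into the reduced four-parameter form of the octupolar potential, namely
\begin{equation*}
  \Phi(\x;\alpha_0,\alpha_2,\alpha_3,\beta_3) = \alpha_2x_2^3 + \alpha_3x_3^3 + 6\alpha_0x_1x_2x_3 + 3\beta_3x_1^2x_3 - 3\alpha_2x_1^2x_2 - 3(\alpha_3+\beta_3)x_2^2x_3,
\end{equation*}
since the claimed symmetry involves only the four surviving parameters. The transformation in question sends the triple $(x_2,\alpha_0,\alpha_2)$ to $(-x_2,-\alpha_0,-\alpha_2)$ while leaving $x_1$, $x_3$, $\alpha_3$, and $\beta_3$ fixed; the task is therefore to check that $\Phi$ is invariant under this simultaneous sign flip.

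First I would recast the invariance as a parity count. Assign the label \emph{odd} to each of the three quantities $x_2$, $\alpha_0$, and $\alpha_2$, and observe that the claim holds precisely when every monomial of $\Phi$ contains an even number of odd factors, counted with multiplicity. Going through the six monomials: $\alpha_2 x_2^3$ carries one $\alpha_2$ and three $x_2$ (four odd factors); $6\alpha_0 x_1 x_2 x_3$ carries one $\alpha_0$ and one $x_2$ (two); $-3\alpha_2 x_1^2 x_2$ carries one $\alpha_2$ and one $x_2$ (two); $-3(\alpha_3+\beta_3)x_2^2 x_3$ carries two $x_2$ (two); and the remaining two terms $\alpha_3 x_3^3$ and $3\beta_3 x_1^2 x_3$ carry no odd factors (zero). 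Every count is even, so each monomial, and hence their sum, is unchanged.

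Since the verification reduces to this elementary parity bookkeeping, there is no genuine obstacle: the only point requiring care is to use the reduced potential rather than the original seven-parameter expression, so that the two sign-reversed parameters $\alpha_0$ and $\alpha_2$ are exactly the ones appearing. Equivalently, one may simply substitute term by term and read off that $(-\alpha_2)(-x_2)^3=\alpha_2 x_2^3$, that $6(-\alpha_0)x_1(-x_2)x_3 = 6\alpha_0 x_1 x_2 x_3$, and so forth, recovering the left-hand side line by line. The intended consequence $\alpha_2\ge 0$ then follows at once, because any configuration with $\alpha_2<0$ is carried by this symmetry to one with $\alpha_2>0$ sharing the same maximal value of $\Phi$, so we may restrict attention to $\alpha_2\ge 0$ without loss of generality.
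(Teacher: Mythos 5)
Your proof is correct: the simultaneous sign flip of $x_2$, $\alpha_0$, $\alpha_2$ leaves every monomial of the four-parameter potential unchanged, since each contains an even number of these factors. The paper states this Proposition without proof precisely because this direct term-by-term verification is the intended (and only needed) argument, so your approach matches the paper's.
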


Next, we turn to the assumption that the North pole $(0,0,1)^\T$ is
the maximum point of the octupolar potential with value $\alpha_3$.

\begin{Theorem}\label{Th-admission}
  Suppose that the North pole $(0,0,1)^\T$ is the maximum point of the octupolar potential $\Phi(\x)$ on $\SPHERE$.
  Then, we have that
  \begin{equation}\label{Admiss}
    % \alpha_3 \geq 0 \qquad\text{ and }\qquad
    3\alpha_3^2 - 4\alpha_3\beta_3 - 4\beta_3^2 - 4\alpha_0^2 \geq 0.
  \end{equation}
  If the strict inequality holds in \eqref{Admiss}, then $(0,0,1)^\T$ is
  a (local) maximum point of $\Phi(\x)$ on $\SPHERE$.
\end{Theorem}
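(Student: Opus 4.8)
The plan is to treat $(0,0,1)^\T$ as a critical point of the constrained problem $\max_{\x\in\SPHERE}\Phi(\x)$ and to extract \eqref{Admiss} from the second-order optimality conditions. First I would record that the pole really is a critical point on $\SPHERE$: the successive reductions leading to the present form of $\A$ were arranged precisely so that $(0,0,1)^\T$ is a Z-eigenvector with Z-eigenvalue $\alpha_3$, i.e. $\A\x^2=\alpha_3\x$ there, equivalently $\nabla\Phi(0,0,1)=3\alpha_3(0,0,1)^\T$ (using that $\Phi$ is homogeneous of degree three, so $\nabla\Phi=3\A\x^2$). This identifies the Lagrange multiplier and tells us the tangent space to $\SPHERE$ at the pole is $T=\mathrm{span}\{\vt{e}_1,\vt{e}_2\}$.

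Next I would differentiate the reduced potential twice at $(0,0,1)^\T$. A direct calculation gives a block-diagonal Hessian $\nabla^2\Phi$ whose $(x_1,x_2)$-block is
\begin{equation*}
  \begin{pmatrix} 6\beta_3 & 6\alpha_0 \\ 6\alpha_0 & -6(\alpha_3+\beta_3) \end{pmatrix},
\end{equation*}
with $\partial_{13}\Phi=\partial_{23}\Phi=0$ and $\partial_{33}\Phi=6\alpha_3$. The decisive point — and the step most easily gotten wrong — is that for optimization constrained to the curved surface $\SPHERE$ one must not test this Hessian directly, but rather the Hessian of the Lagrangian $\Phi-\tfrac{3}{2}\alpha_3(|\x|^2-1)$, namely $\nabla^2\Phi-3\alpha_3 I$, restricted to $T$. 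The curvature correction $-3\alpha_3 I$ is exactly what makes \eqref{Admiss} appear; equivalently, parametrizing by $x_3=\sqrt{1-x_1^2-x_2^2}$ and expanding to second order produces the same shift through $\alpha_3 x_3^3=\alpha_3-\tfrac{3}{2}\alpha_3(x_1^2+x_2^2)+\cdots$. This constraint/curvature bookkeeping is the only genuinely delicate part of the argument.

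Restricting to $T$, the matrix to analyze is
\begin{equation*}
  H=\begin{pmatrix} 6\beta_3-3\alpha_3 & 6\alpha_0 \\ 6\alpha_0 & -9\alpha_3-6\beta_3 \end{pmatrix},
\end{equation*}
and I would invoke the standard second-order necessary condition: if $(0,0,1)^\T$ is a (local, hence a fortiori global) maximum, then $H$ is negative semidefinite, which for a symmetric $2\times2$ matrix means $\mathrm{tr}\,H\le 0$ and $\det H\ge 0$. The trace is $\mathrm{tr}\,H=-12\alpha_3\le 0$, holding automatically because $\alpha_3\ge 0$, so the only real content is $\det H\ge 0$. A short computation gives $\det H=9\,(3\alpha_3^2-4\alpha_3\beta_3-4\beta_3^2-4\alpha_0^2)$, and dividing by $9$ yields exactly \eqref{Admiss}.

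For the sufficiency clause I would run the second-order sufficient condition in reverse. If the strict inequality holds in \eqref{Admiss}, then $\det H>0$; moreover strict inequality forces $\alpha_3>0$, since otherwise the left-hand side collapses to $-4\beta_3^2-4\alpha_0^2\le 0$. Hence $\mathrm{tr}\,H=-12\alpha_3<0$, so $H$ is negative definite, the Lagrangian Hessian is negative definite on $T$, and the second-order sufficient condition for constrained optimization certifies that $(0,0,1)^\T$ is a strict local maximum of $\Phi$ on $\SPHERE$. Once the $-3\alpha_3 I$ shift is correctly in place, everything reduces to the elementary definiteness test for a $2\times2$ matrix.
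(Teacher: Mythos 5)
Your proof is correct and takes essentially the same route as the paper: both arguments impose the second-order Lagrangian optimality conditions on the tangent plane at the North pole, and your matrix $H$ is exactly $-3$ times the paper's projected Hessian block (the paper works with the Lagrangian of $-\tfrac{1}{3}\Phi$ and checks positive semidefiniteness via explicit eigenvalues, while you check negative semidefiniteness via the trace/determinant test — an equivalent computation). The only blemish is the parenthetical ``local, hence a fortiori global,'' which has the implication backwards: the hypothesis gives a global maximum, hence in particular a local one, to which the necessary conditions apply.
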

\begin{proof}
  We consider the spherical optimization problem:
  \begin{equation}\label{sph-opt}
  \left\{\begin{aligned}
    \max ~& \Phi(\x)=\A\x^3 \\
    \st ~~& \x^\T\x=1.
  \end{aligned}\right.
  \end{equation}
  Its Lagrangian is
  \begin{equation*}
    L(\x,\lambda) = -\frac{1}{3}\A\x^3 + \frac{\lambda}{2}(\x^\T\x-1).
  \end{equation*}
  The Hessian of the Lagrangian is
  \begin{equation}\label{Hessian}
    \nabla_{\x\x}^2L(\x,\lambda) = \lambda I - 2\A\x,
  \end{equation}
  which is positive semidefinite on the tangent space $\x^{\bot}\equiv\{\y\in\REAL^3:\x^\T\y=0\}$
  if $\x$ is a (local) maximum point of $\Phi(\x)$ on $\SPHERE$ \cite{NoW-06}.
  Let $P \equiv I-\x\x^\T \in\REAL^{3\times3}$ be the projection matrix onto $\x^{\bot}$.
  Then, the matrix $P^\T \nabla_{\x\x}^2L(\x,\lambda) P$ is positive semidefinite.
  By use of the first-order necessary condition,
  \begin{equation}\label{aaaaaa}
    \A\x^2 = \lambda\x \qquad\text{ and }\qquad \x^\T\x=1,
  \end{equation}
  we have that
  \begin{equation}\label{cccccccc}
    P^\T \nabla_{\x\x}^2L(\x,\lambda) P = \lambda(I+\x\x^\T) -2\A\x.
  \end{equation}

%  By some calculations, we know that columns of the matrix
%  \begin{equation*}
%    Z=\left[
%        \begin{array}{cc}
%          x_2-x_1x_2 & x_3-x_1x_3 \\
%          1-x_1-x_2^2 & -x_2x_3 \\
%          -x_2x_3 & 1-x_1-x_3^2 \\
%        \end{array}
%      \right]
%  \end{equation*}
%  form a basis of the tangent space $\x^{\bot}$. Hence,
%  If $\x$ is a (local) maximum point of $\Phi(\x)$ on $\SPHERE$,
%  we get that the reduced Hessian
%  \begin{equation}\label{Red-Hess}
%    Z^\T \nabla_{\x\x}^2 L(\x,\lambda)Z \text{ is positive semidefinite.}
%  \end{equation}

  Because the North pole $(0,0,1)^\T$ is a maximum point with
  the associated multiplier $\lambda=\alpha_3$, we arrive at
  \begin{equation*}
    [P^\T \nabla_{\x\x}^2 L(\x,\lambda)P]_{\lambda=\alpha_3,\x=(0,0,1)^\T}
      = \left[
          \begin{array}{ccc}
            \alpha_3-2\beta_3 & -2\alpha_0  & 0 \\
            -2\alpha_0 & 3\alpha_3+2\beta_3 & 0 \\
            0 & 0 & 0\\
          \end{array}
        \right].
  \end{equation*}
  As easily seen, this projected Hessian has eigenvalues
  \begin{equation*}
    \mu_1 = 0, \qquad
    \mu_{2,3} = 2\alpha_3 \pm \sqrt{(\alpha_3+2\beta_3)^2+4\alpha_0^2},
  \end{equation*}
  which are all required to be non-negative \cite{NoW-06}. Hence, we obtain the following inequality
  \begin{equation*}
    3\alpha_3^2 - 4\alpha_3\beta_3 - 4\beta_3^2 - 4\alpha_0^2 \geq 0.
  \end{equation*}
  If the strict inequality holds, i.e., if $3\alpha_3^2 - 4\alpha_3\beta_3 - 4\beta_3^2 - 4\alpha_0^2 > 0$, then
  the North pole $(0,0,1)^\T$ is a (local) maximum point of $\Phi(\x)$ on $\SPHERE$ \cite{NoW-06}.
\end{proof}

We first consider the case that $\alpha_3=0$.
From Theorem \ref{Th-admission}, we know that $\alpha_0=\beta_3=0$.
If $\alpha_2>0$, then $\Phi(0,1,0)=\alpha_2>\alpha_3=\Phi(0,0,1)$.
This contradicts that $(0,0,1)^\T$ is a maximum point. Hence $\alpha_2=0$ and
the octupolar tensor $\A$ is the trivial zero tensor.

In the remainder of this paper, we shall consider the case that $\alpha_3$ is positive.
Without loss of generality, by Proposition \ref{alpha2-sym} and Theorem \ref{Th-admission}, we can choose
\begin{equation}\label{Admissive-region}
    \alpha_3=1, \quad \alpha_2\geq0, \quad\text{ and }\quad \alpha_0^2 + (\beta_3+\tfrac{1}{2})^2 \leq 1.
\end{equation}
Then, the octupolar tensor
\begin{equation}\label{A-4ele.}%\footnotesize
    \A(\alpha_0,\beta_3,\alpha_2) = \left[
         \begin{array}{ccc|ccc|ccc}
           0 & -\alpha_2 & \beta_3 & -\alpha_2 & 0 & \alpha_0 & \beta_3 & \alpha_0 & 0 \\
           -\alpha_2 & 0 & \alpha_0 & 0 & \alpha_2 & -1-\beta_3 & \alpha_0 & -1-\beta_3 & 0 \\
           \beta_3 & \alpha_0 & 0 & \alpha_0 & -1-\beta_3 & 0 & 0 & 0 & 1 \\
         \end{array}
    \right]
\end{equation}
has only three independent elements and the associated octupolar potential is given by
\begin{equation}\label{poten-reduced}
    \Phi(\x) = \alpha_2x_2^3 + x_3^3 + 6\alpha_0x_1x_2x_3
         + 3\beta_3x_1^2x_3 - 3\alpha_2x_1^2x_2 - 3(1+\beta_3)x_2^2x_3
\end{equation}
for all $\x\in\SPHERE$.

%Hence, the octupolar potential reduces to
%\begin{equation*}
%    \Phi(\x)=\alpha_2(x_2^3 - 3x_1^2x_2).
%\end{equation*}
%Without loss of generality, we set $\alpha_2=1$ and study a special tensor
%\begin{equation*}
%    \left[
%         \begin{array}{ccc|ccc|ccc}
%           0  & -1 & 0 & -1 & 0 & 0 & 0 & 0 & 0 \\
%           -1 & 0  & 0 & 0  & 1 & 0 & 0 & 0 & 0 \\
%           0  & 0  & 0 & 0  & 0 & 0 & 0 & 0 & 0 \\
%         \end{array}
%    \right].
%\end{equation*}
%By some calculations, we know
%\begin{equation*}
%    \RESULTANT(\A\x^2)=0 \quad\text{ and }\quad \phi_{\A}(\lambda)=256\lambda^8(\lambda^2-1)^3.
%\end{equation*}
%For the positive Z-eigenvalue $\lambda=1$, there are three associated Z-eigenvectors
%\begin{equation*}
%    \x^{(1)}=(0,1,0)^\T, ~~\x^{(2)}=\left(\tfrac{\sqrt{3}}{2},-\tfrac{1}{2},0\right)^\T,
%      ~~\x^{(3)}=\left(-\tfrac{\sqrt{3}}{2},-\tfrac{1}{2},0\right)^\T.
%\end{equation*}
%Figure \ref{octuPlot-0100}(a) illustrates the contour profile of the octupolar potential with $\alpha_2=1$.
%If $\alpha_2=-1$, the associated contour profile is shown in Figure \ref{octuPlot-0100}(b).
%Obviously, a symmetry appears.

%\begin{figure}[!btp]
%\begin{center}
%\begin{tabular}{cc}
%  \includegraphics[width=.46\textwidth]{Figs/OctuPlot_0100.eps} &
%  \includegraphics[width=.46\textwidth]{Figs/OctuPlot_0-100.eps} \\
%  (a) $\alpha_2=1$ & (b) $\alpha_2=-1$ \\
%\end{tabular}
%\end{center}
%  \centering\includegraphics[width=.92\textwidth]{Figure-1.eps}
%  \caption{The contour profile of the octupolar potential with
%    $\alpha_3=\alpha_0=\beta_3=0$.}\label{octuPlot-0100}
%\end{figure}

\section{The E-characteristic polynomial}\label{sec:3}

Qi \cite{Qi-05} introduced E-eigenvalues and E-eigenvectors for a symmetric tensor and showed that they are
invariant under orthonormal coordinate changes.
E-eigenvalues and E-eigenvectors were further studied in \cite{Qi-07,NQWW-07,CaS-13}.
Furthermore, the coefficients of the E-characteristic polynomial of a tensor are
orthonormal invariants of that tensor \cite{LQZ-13}.
If the E-eigenvalues and the associated E-eigenvectors of a real-valued symmetric tensor
are real, we call them the Z-eigenvalues and the Z-eigenvectors of the tensor, respectively.

Using the notion of resultant from algebraic geometry \cite{CLO-04},
we write now compute explicitly the resultant $\RESULTANT(\A\x^2)$, where
\begin{equation*}
    \A\x^2 = \left(\begin{array}{c}
      -2\alpha_2x_1x_2 +2\beta_3x_1x_3 +2\alpha_0x_2x_3 \\
      -\alpha_2x_1^2+\alpha_2x_2^2+2\alpha_0x_1x_3-2(1+\beta_3)x_2x_3 \\
      \beta_3x_1^2 -(1+\beta_3)x_2^2 + x_3^2 +2\alpha_0x_1x_2
    \end{array}\right)
    \equiv \left(\begin{aligned}
      F_3(x_1,x_2,x_3) \\ F_1(x_1,x_2,x_3) \\ F_2(x_1,x_2,x_3)
    \end{aligned}\right).
\end{equation*}
Since $F_1$, $F_2$, and $F_3$ are homogeneous polynomials with degree two in the variables $x_1$, $x_2$, and $x_3$,
the multi-polynomial system
\begin{equation}\label{poly-system-0}
    F_1(x_1,x_2,x_3) = F_2(x_1,x_2,x_3) = F_3(x_1,x_2,x_3) = 0
\end{equation}
has a trivial solution $x_1=x_2=x_3=0$. However, we are interested in its non-trivial solutions and we
assume that the multi-polynomial system \eqref{poly-system-0} has a non-zero common (complex) root.

%By Proposition 2.6 in \cite{CLO-04}, we assume $\A\x^2=0$ has only the trivial solution $\x=\vt{0}$.
%Then, the system $F_0=\cdots=F_3=0$ has a solution over $\COMPLEX$
%if and only if $\RESULTANT(F_0,\ldots,F_3)=0$.
%From Theorem 3.1, $\RESULTANT(F_0,\ldots,F_1)$ is a polynomial
%in $\lambda,\alpha_0,\alpha_2,\alpha_3,\beta_3$ with total degree $32$.

According to Theorem 2.3 in Chapter 3 of \cite{CLO-04}, there is a unique irreducible polynomial
$\RESULTANT(F_1,F_2,F_3)$ such that $\RESULTANT(x_1^2,x_2^2,x_3^2)=1$
and the system $F_1=F_2=F_3=0$ has a non-trivial solution over $\COMPLEX$ if, and only if,
$\RESULTANT(F_1,F_2,F_3)=0$.
We follow the approach in Chapter 3, $\S4$ of \cite{CLO-04}. Since the degree of each of the $F_i$ is $d_i=2$, we set
\begin{equation*}
    d = \sum_{i=1}^3(d_i-1)+1 = 4.
\end{equation*}
We divide monomials $\x^{{\bm \upsilon}}=x_1^{\upsilon_1}x_2^{\upsilon_2}x_3^{\upsilon_3}$ of
total degree $|{\bm \upsilon}|\equiv \upsilon_1+\upsilon_2+\upsilon_3=d$ into three sets:
\begin{eqnarray*}
  S_1 &=& \{\x^{{\bm \upsilon}}:|{\bm \upsilon}|=d, x_1^2\text{ divides }\x^{{\bm \upsilon}}\}
      = \{x_1^4, x_1^3x_2, x_1^3x_3, x_1^2x_2^2, x_1^2x_2x_3, x_1^2x_3^2\}, \\
  S_2 &=& \{\x^{{\bm \upsilon}}:|{\bm \upsilon}|=d, x_1^2\text{ doesn't divides }\x^{{\bm \upsilon}}
          \text{ but }x_2^2\text{ does}\}
      = \{x_1x_2^3, x_1x_2^2x_3, x_2^4, x_2^3x_3, x_2^2x_3^2\}, \\
  S_3 &=& \{\x^{{\bm \upsilon}}:|{\bm \upsilon}|=d, x_1^2,x_2^2\text{ don't divide }\x^{{\bm \upsilon}}
          \text{ but }x_3^2\text{ does}\}
      = \{x_1x_2x_3^2, x_1x_3^3, x_2x_3^3, x_3^4\}.
\end{eqnarray*}
Clearly, there are ${d+2 \choose 2}=15$ monomials $\x^{{\bm \upsilon}}$ with $|{\bm \upsilon}|=d$
and each lies in one of sets $S_1$, $S_2$, and $S_3$, which are mutually disjoint.
We next write the system of equations
\begin{equation*}%\label{lin-system}
\left\{\begin{aligned}
  \x^{{\bm \upsilon}}/x_1^2 \cdot F_1 = 0 & & \text{ for all }\x^{{\bm \upsilon}}\in S_1, \\
  \x^{{\bm \upsilon}}/x_2^2 \cdot F_2 = 0 & & \text{ for all }\x^{{\bm \upsilon}}\in S_2, \\
  \x^{{\bm \upsilon}}/x_3^2 \cdot F_3 = 0 & & \text{ for all }\x^{{\bm \upsilon}}\in S_3. \\
\end{aligned}\right.
\end{equation*}
Its coefficient matrix,
\begin{eqnarray*}%\label{D-mat}%\footnotesize
D &=& \left[
  \begin{array}{cccccccc}
    -\alpha_2 &  & 2\alpha_0 & \alpha_2 & -2(1+\beta_3) &  &  &  \\
     & -\alpha_2 &  &  & 2\alpha_0 &  & \alpha_2 & -2(1+\beta_3) \\
     &  & -\alpha_2 &  &  & 2\alpha_0 &  & \alpha_2 \\
     &  &  & -\alpha_2 &  &  &  & 2\alpha_0  \\
     &  &  &  & -\alpha_2 &  &  & \\
     &  &  &  &  & -\alpha_2 &  & \\
     & \beta_3 &  & 2\alpha_0 &  &  & -1-\beta_3 & \\
     &  & \beta_3 &  & 2\alpha_0 &  &  & -1-\beta_3 \\
     &  &  & \beta_3 &  &  & 2\alpha_0 & \\
     &  &  &  & \beta_3 &  &  & 2\alpha_0 \\
     &  &  &  &  & \beta_3 &  & \\
     &  &  & -2\alpha_2 & 2\beta_3 &  &  & 2\alpha_0 \\
     &  &  &  & -2\alpha_2 & 2\beta_3 &  &  \\
     &  &  &  &  &  &  & -2\alpha_2\\
     &  &  &  &  &  &  & \\
  \end{array}
\right. \\
  &&{} \left.
  \begin{array}{cccccccc}
     &  &  &  &  &  &  &  \\
     &  &  &  &  &  &  &  \\
     &  &  & -2(1+\beta_3) &  &  &  \\
    \alpha_2 & -2(1+\beta_3) &  &  &  &  &  \\
     & \alpha_2 & -2(1+\beta_3) & 2\alpha_0 &  &  &  \\
     &  & \alpha_2 &  & 2\alpha_0 & -2(1+\beta_3) &  \\
     &  &  & 1 &  &  &  \\
     &  &  &  & 1 &  &  \\
    -1-\beta_3 &  & 1 &  &  &  &  \\
     & -1-\beta_3 &  &  &  & 1 &  \\
     &  & -1-\beta_3 & 2\alpha_0 &  &  & 1 \\
     &  &  &  &  &  &  &  \\
     &  &  & 2\alpha_0 &  &  &  \\
     &  & 2\alpha_0 & 2\beta_3 &  &  &  \\
     &  &  & -2\alpha_2 & 2\beta_3 & 2\alpha_0 &  \\
  \end{array}
\right] %\nonumber
\end{eqnarray*}
in the unknowns $\x^{{\bm \upsilon}}$ with $|{\bm \upsilon}|=d$ is important in the sense that
\begin{equation*}
    \det(D) = \RESULTANT(\A\x^2) \cdot \text{extraneous factor}.
\end{equation*}

%\begin{landscape}
%\begin{equation}\label{D-mat}%\footnotesize
%\left[
%  \begin{array}{ccccccccccccccc}
%    -\alpha_2 &  & 2\alpha_0 & \alpha_2 & -2(\alpha_3+\beta_3) &  &  &  &  &  &  &  &  &  &  \\
%     & -\alpha_2 &  &  & 2\alpha_0 &  & \alpha_2 & -2(\alpha_3+\beta_3) &  &  &  &  &  &  &  \\
%     &  & -\alpha_2 &  &  & 2\alpha_0 &  & \alpha_2 &  &  &  & -2(\alpha_3+\beta_3) &  &  &  \\
%     &  &  & -\alpha_2 &  &  &  & 2\alpha_0 & \alpha_2 & -2(\alpha_3+\beta_3) &  &  &  &  &  \\
%     &  &  &  & -\alpha_2 &  &  &  &  & \alpha_2 & -2(\alpha_3+\beta_3) & 2\alpha_0 &  &  &  \\
%     &  &  &  &  & -\alpha_2 &  &  &  &  & \alpha_2 &  & 2\alpha_0 & -2(\alpha_3+\beta_3) &  \\
%     & \beta_3 &  & 2\alpha_0 &  &  & -(\alpha_3+\beta_3) &  &  &  &  & \alpha_3 &  &  &  \\
%     &  & \beta_3 &  & 2\alpha_0 &  &  & -(\alpha_3+\beta_3) &  &  &  &  & \alpha_3 &  &  \\
%     &  &  & \beta_3 &  &  & 2\alpha_0 &  & -(\alpha_3+\beta_3) &  & \alpha_3 &  &  &  &  \\
%     &  &  &  & \beta_3 &  &  & 2\alpha_0 &  & -(\alpha_3+\beta_3) &  &  &  & \alpha_3 &  \\
%     &  &  &  &  & \beta_3 &  &  &  &  & -(\alpha_3+\beta_3) & 2\alpha_0 &  &  & \alpha_3 \\
%     &  &  & -2\alpha_2 & 2\beta_3 &  &  & 2\alpha_0 &  &  &  &  &  &  &  \\
%     &  &  &  & -2\alpha_2 & 2\beta_3 &  &  &  &  &  & 2\alpha_0 &  &  &  \\
%     &  &  &  &  &  &  & -2\alpha_2 &  &  & 2\alpha_0 & 2\beta_3 &  &  &  \\
%     &  &  &  &  &  &  &  &  &  &  & -2\alpha_2 & 2\beta_3 & 2\alpha_0 &  \\
%  \end{array}
%\right]
%\end{equation}
%\end{landscape}

Now, we consider to the extraneous factor. A monomial $\x^{{\bm \upsilon}}$ of total degree $d=4$
is reduced if $x_i^{2}$ divides $\x^{{\bm \upsilon}}$ for exactly one $i$.
Let $D'$ be the determinant of the submatrix of $D$ obtained by deleting all rows and columns
corresponding to reduced monomials, i.e.,
\begin{equation*}
    D'=\left[
         \begin{array}{ccc}
           -\alpha_2 &  &  \\
            & -\alpha_2 & \alpha_2 \\
            & \beta_3 & -1-\beta_3 \\
         \end{array}
       \right].
\end{equation*}
By Theorem 4.9 in Chapter 3 of \cite{CLO-04}, to within a sign, the resultant reads as
\begin{eqnarray}\label{Result-Ax2}
    \RESULTANT(\A\x^2) &=& \frac{\det D}{\det D'} \\
      &=& 16\alpha_2^2(48\alpha_0^8\beta_3+4\alpha_0^6(\alpha_2^2+\beta_3(32\beta_3^2+24\beta_3-9))
          +3\alpha_0^4(\alpha_2^2(52\beta_3^2+28\beta_3-1) \nonumber\\&&{} +4\beta_3^2(8\beta_3^3+8\beta_3^2-9\beta_3-9))
          +6\alpha_0^2(\alpha_2^4(4\beta_3+1)-\alpha_2^2\beta_3(14\beta_3^3+36\beta_3^2+35\beta_3 \nonumber\\&&{} +10)
          -2\beta_3^3(\beta_3+1)^2(8\beta_3+9))
          +(\alpha_2^2-4(\beta_3+1)^3)(\alpha_2^2-\beta_3^2(2\beta_3+3))^2).  \nonumber
\end{eqnarray}

\begin{Theorem}[\cite{CLO-04}]\label{Th-Ax^2}
  There exists a vector $\x\neq\vt{0}$ such that $\A\x^2=\vt{0}$ if, and only if,
  $\RESULTANT(\A\x^2)=0$, where the formula for $\RESULTANT(\A\x^2)$
  is given by \eqref{Result-Ax2}.
\end{Theorem}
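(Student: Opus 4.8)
The plan is to split the claim into its logical half (the equivalence) and its computational half (the closed-form expression), disposing of the former by the general theory of multipolynomial resultants and reducing the latter to the evaluation of the Macaulay matrix already assembled above. Observe first that, since $\A\x^2=(F_3,F_1,F_2)^\T$, a nonzero (complex) vector $\x$ with $\A\x^2=\vt{0}$ is exactly a nontrivial common zero of the three ternary quadrics $F_1=F_2=F_3=0$, in agreement with the standing assumption that the system admits a nonzero common complex root. By Theorem 2.3 in Chapter 3 of \cite{CLO-04}, this system possesses a unique irreducible resultant $\RESULTANT(F_1,F_2,F_3)$, pinned down up to sign by the normalization $\RESULTANT(x_1^2,x_2^2,x_3^2)=1$, whose vanishing over $\COMPLEX$ is equivalent to the existence of such a nontrivial common zero. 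This settles the ``if and only if'' assertion and reduces the theorem to identifying this resultant with the polynomial displayed in \eqref{Result-Ax2}.

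To reach the closed form I would invoke the construction of Chapter 3, $\S4$ of \cite{CLO-04} exactly as set up above. With $d=4$ and the ${d+2 \choose 2}=15$ monomials of total degree $d$ split into the disjoint families $S_1,S_2,S_3$, the auxiliary system $\{\x^{{\bm \upsilon}}/x_i^2\cdot F_i=0\}$ has the square coefficient matrix $D$, and the theory guarantees a factorization $\det D=\RESULTANT(\A\x^2)\cdot E$ in which the extraneous factor $E$ depends only on the leading forms of the $F_i$. Theorem 4.9 in Chapter 3 of \cite{CLO-04} identifies $E$ with $\det D'$, where $D'$ is the minor of $D$ indexed by the three non-reduced monomials $x_1^2x_2^2,x_1^2x_3^2,x_2^2x_3^2$; expanding this $3\times3$ minor gives $\det D'=-\alpha_2^2$. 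Hence $\RESULTANT(\A\x^2)=\det D/\det D'$, and the theorem follows once this quotient is evaluated.

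The remaining and genuinely laborious step is that evaluation: expanding the $15\times15$ symbolic determinant $\det D$, whose nonzero entries are built from $-\alpha_2,\ 2\alpha_0,\ \beta_3,\ -1-\beta_3$, and $1$, then dividing by $\det D'=-\alpha_2^2$ and collecting terms into the form \eqref{Result-Ax2}. I expect this to be the main obstacle, and in practice it is best delegated to a computer algebra system, since the raw expansion is far too large to organize by hand. Several structural checks can then guard against error: the division of $\det D$ by $-\alpha_2^2$ must be exact, with no remainder, as Theorem 4.9 predicts; specializing the entries so that $F_i=x_i^2$ must return $\RESULTANT(x_1^2,x_2^2,x_3^2)=1$; and, since the resultant of three plane quadrics is homogeneous of degree $d_jd_k=4$ in the coefficients of each form while every coefficient of $F_1,F_2,F_3$ has degree at most one in $(\alpha_0,\beta_3,\alpha_2)$, the output must have total degree at most $12$ in these three parameters --- consistent with the top-order monomials $\alpha_2^2\alpha_0^8\beta_3$ and $\alpha_2^2\beta_3^9$ (of degree $11$) appearing in \eqref{Result-Ax2}.
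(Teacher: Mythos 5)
Your proposal is correct and follows essentially the same route as the paper: the equivalence is delegated to Theorem 2.3 in Chapter 3 of \cite{CLO-04}, and the closed form \eqref{Result-Ax2} is obtained from Macaulay's construction as $\det D/\det D'$ (Theorem 4.9 in Chapter 3 of \cite{CLO-04}), with the $15\times15$ determinant expanded by computer algebra; your evaluation $\det D'=-\alpha_2^2$ and your degree and normalization checks are consistent with the paper's computation.
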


By the same approach, we compute the E-characteristic polynomial $\phi_{\A}(\lambda)$
of the octupolar tensor \eqref{A-4ele.}, which is a resultant of
the following system of homogeneous polynomial equations
\begin{equation}\label{poly-system}
\left\{\begin{array}{rcl}
  x_1^2+x_2^2+x_3^2 - x_0^2 &=& 0, \\
  -\alpha_2x_1^2+\alpha_2x_2^2+2\alpha_0x_1x_3-2(1+\beta_3)x_2x_3 - \lambda x_0x_2 &=& 0, \\
  \beta_3x_1^2 -(1+\beta_3)x_2^2 + x_3^2 +2\alpha_0x_1x_2 - \lambda x_0x_3 &=& 0, \\
  -2\alpha_2x_1x_2 +2\beta_3x_1x_3 +2\alpha_0x_2x_3 - \lambda x_0x_1 &=& 0.
\end{array}\right.
\end{equation}
Using the software Mathematica, we obtain the E-characteristic polynomial $\phi_{\A}(\lambda)$.

\begin{Theorem}\label{Th-E_char}
  The E-characteristic polynomial of the octupolar tensor \eqref{A-4ele.} is
  \begin{equation*}
    \phi_{\A}(\lambda) = (\lambda^2-1)\sum_{i=0}^6c_{2i}\lambda^{2i},
  \end{equation*}
  where
  \begin{eqnarray*}
  c_0 &=& 256\alpha_2^4(48\alpha_0^8\beta_3+4\alpha_0^6(\alpha_2^2+\beta_3(32\beta_3^2+24\beta_3-9))
    +3\alpha_0^4(\alpha_2^2(52\beta_3^2+28\beta_3-1)+4\beta_3^2(8\beta_3^3 \\&&{} +8\beta_3^2-9\beta_3-9))
    +6\alpha_0^2(\alpha_2^4(4\beta_3+1)-\alpha_2^2\beta_3(14\beta_3^3+36\beta_3^2+35\beta_3+10)
      -2\beta_3^3(\beta_3 \\&&{} +1)^2(8\beta_3+9))
    +(\alpha_2^2-4(\beta_3+1)^3)(\alpha_2^2-\beta_3^2(2\beta_3+3))^2)^2,
  \end{eqnarray*}
  and
  \begin{eqnarray*}
  c_{12} &=& 82944 \alpha_0^{10}-11520 \alpha_0^8 (\alpha_2^2-36 \beta_3^2-36 \beta_3+1)-320 \alpha
   _0^6 (2 \alpha_2^2 (72 \beta_3^2-1053 \beta_3-577)+73 \alpha
   _2^4 \\&&{} -2592 \beta_3^4-5184 \beta_3^3-2448 \beta_3^2+144 \beta_3+73)-240
   \alpha_0^4 (\alpha_2^6 -\alpha_2^4 (1583 \beta_3^2+1208 \beta
   _3+922) \\&&{} +\alpha_2^2 (288 \beta_3^4-4424 \beta_3^3-7328 \beta_3^2-116
   \beta_3+1203)-(2 \beta_3+1){}^2 (864 \beta_3^4
   +1728 \beta_3^3+576 \beta_3^2 \\&&{} -288 \beta_3-1))+60 \alpha_0^2
   (32 \alpha_2^8+ \alpha_2^6 (-8 \beta_3^2+1992 \beta_3+678)-\alpha_2^4
   (6168 \beta_3^4+13336 \beta_3^3 \\&&{} +5042 \beta_3^2-4376 \beta_3+1083)-2
   \alpha_2^2 (384 \beta_3^6-848 \beta_3^5-4080 \beta_3^4-80 \beta_3^3+4580
   \beta_3^2+437 \beta_3 \\&&{} -714)+8 (2 \beta_3+1){}^4
   (54 \beta_3^4+108 \beta_3^3+21 \beta_3^2-33 \beta
   _3+4))+(\alpha_2^2+(3 \beta_3+4)^2)
   (16 \alpha_2^4  \\&&{} + \alpha_2^2 (-12 \beta_3^2-132 \beta_3+37)+4
   (2 \beta_3+1){}^3 (3 \beta_3-1))^2.
  \end{eqnarray*}
\end{Theorem}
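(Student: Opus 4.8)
The statement is at heart a symbolic computation: the E-characteristic polynomial $\phi_{\A}(\lambda)$ is, by definition, the $\lambda$-resultant of the homogeneous system \eqref{poly-system}, read as four quadratic forms in the four homogeneous unknowns $(x_0,x_1,x_2,x_3)^\T$. The plan is therefore to compute this resultant by exactly the Macaulay-matrix construction already used for $\RESULTANT(\A\x^2)$ in Theorem~\ref{Th-Ax^2}, now at critical degree $d=\sum_{i=0}^{3}(d_i-1)+1=5$ (all four forms having degree $d_i=2$), or equivalently to invoke a computer-algebra resultant routine, and then to read off the coefficients of $\phi_{\A}(\lambda)$ as polynomials in $(\alpha_0,\alpha_2,\beta_3)$. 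Before trusting the machine output, however, I would first pin down the \emph{shape} $(\lambda^2-1)\sum_{i=0}^6 c_{2i}\lambda^{2i}$ by three independent structural arguments.

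First, $\phi_{\A}$ contains only even powers of $\lambda$. Since $\A$ has order $m=3$, the exponent $m-1=2$ is even, so $\A(-\x)^2=\A\x^2$; hence $(\lambda,\x)$ being an E-eigenpair forces $(-\lambda,-\x)$ to be one as well, the E-eigenvalues occur in $\pm$ pairs, and $\phi_{\A}(\lambda)=\phi_{\A}(-\lambda)$. Second, $\lambda^2-1$ divides $\phi_{\A}$: the North pole $(0,0,1)^\T$ is, by construction, a Z-eigenvector of \eqref{A-4ele.} with Z-eigenvalue $\alpha_3=1$, and by the pairing just noted $-1$ is also an E-eigenvalue, so $(\lambda-1)(\lambda+1)=\lambda^2-1$ is a factor. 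Third, the degree in $\lambda$ is $14$: this is the known degree of the E-characteristic polynomial for a generic third-order three-dimensional tensor, consistent with the $1+2+2^2=7$ generic complex projective E-eigenvectors, each of which lifts to two points of the homogenized variety through the two square roots defining $x_0$, hence doubling to $14$. Combining these, $\phi_{\A}=(\lambda^2-1)Q(\lambda)$ with $Q$ even of degree $12$, which is precisely the asserted form; the computation then only has to supply the six coefficients $c_0,\dots,c_{12}$.

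The two coefficients displayed in the theorem admit partial \emph{a~priori} checks that I would use to validate the symbolic result. For $c_0$, set $\lambda=0$ in \eqref{poly-system}: the last three equations collapse to exactly $\A\x^2=\vt{0}$ (none of $F_1,F_2,F_3$ involves $x_0$), while the first equation merely fixes $x_0^2=x_1^2+x_2^2+x_3^2$. Thus $\phi_{\A}(0)=-c_0$ vanishes precisely when $\RESULTANT(\A\x^2)=0$, and a Poisson-type product argument — each projective solution of $\A\x^2=\vt{0}$ lifting to the two points $x_0=\pm\sqrt{x_1^2+x_2^2+x_3^2}$ of the full variety — shows the vanishing is of order two, giving $c_0=\RESULTANT(\A\x^2)^2$. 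This matches \eqref{Result-Ax2} term-for-term once one notes $256\alpha_2^4=(16\alpha_2^2)^2$. For the leading coefficient $c_{12}$ (the coefficient of $\lambda^{14}$ in $\phi_{\A}$), I would extract the top-$\lambda$-degree part of the resultant, where the $-\lambda x_0 x_j$ terms dominate; its non-vanishing, visible from the leading monomial $82944\,\alpha_0^{10}$, confirms $\deg_\lambda\phi_{\A}=14$ exactly.

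The genuine obstacle is the size of the symbolic computation itself. At critical degree $d=5$ the Macaulay matrix in four variables has $\binom{8}{3}=56$ rows and columns, and its determinant, divided by the extraneous factor (the minor indexed by the reduced monomials, exactly as in the passage preceding \eqref{Result-Ax2}), is an enormous polynomial in $\lambda,\alpha_0,\alpha_2,\beta_3$; isolating the coefficients $c_{2i}$ and recognizing their clean closed forms — in particular the perfect-square structure of $c_0$ — is only feasible with a computer algebra system, which is why one resorts to Mathematica. The structural facts of the second and third paragraphs are then not decorations but essential consistency checks: they guarantee that the returned polynomial is even, carries the factor $\lambda^2-1$, has degree $14$, and reduces at $\lambda=0$ to the square of the previously computed resultant $\RESULTANT(\A\x^2)$.
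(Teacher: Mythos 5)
Your proposal is correct and takes essentially the same route as the paper: there, $\phi_{\A}(\lambda)$ is likewise defined as the resultant of the homogenized system \eqref{poly-system} and computed with Mathematica by the same Macaulay-matrix construction used for $\RESULTANT(\A\x^2)$, and your structural consistency checks (evenness in $\lambda$, the factor $\lambda^2-1$, degree $14$, and $c_0=[\RESULTANT(\A\x^2)]^2$) are sound additions that the paper only records afterwards as a corollary. One parenthetical slip worth fixing: the extraneous factor is the minor of the Macaulay matrix indexed by the \emph{non-reduced} monomials (those divisible by $x_i^{2}$ for at least two indices $i$), not by the reduced ones.
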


The E-characteristic polynomial $\phi_{\A}(\lambda)$ is a  polynomial of degree $14$,
with no odd-degree terms.
By comparing the expression of $c_{12}$ and (\ref{Result-Ax2}), we have the following corollary:

\begin{Corollary}
  The constant term of the E-characteristic polynomial $\phi_{\A}(\lambda)$ is
  \begin{equation*}
    c_0 = \left[\RESULTANT(\A\x^2)\right]^2.
  \end{equation*}
\end{Corollary}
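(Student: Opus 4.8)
The plan is to establish the corollary by a direct structural comparison of the two explicit formulas already produced in the excerpt, namely the closed form for $\RESULTANT(\A\x^2)$ in \eqref{Result-Ax2} and the closed form for $c_0$ in Theorem~\ref{Th-E_char}. First I would abbreviate the lengthy polynomial enclosed in the outer parentheses of \eqref{Result-Ax2} by a single symbol, say $P=P(\alpha_0,\beta_3,\alpha_2)$, so that \eqref{Result-Ax2} reads compactly as $\RESULTANT(\A\x^2)=16\alpha_2^2\,P$.

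With this notation in place, I would turn to the expression for $c_0$ in Theorem~\ref{Th-E_char} and observe that, reading it from the outside in, it has the form $256\alpha_2^4$ times the square of a bracketed polynomial. The key claim is that this bracketed polynomial is exactly $P$: term by term, the inner expression for $c_0$ reproduces the same degree-eight-in-$\alpha_0$ polynomial appearing in \eqref{Result-Ax2}, with the same coefficient pattern in $\alpha_0^8,\alpha_0^6,\alpha_0^4,\alpha_0^2$ and the same $\alpha_0$-free tail $(\alpha_2^2-4(\beta_3+1)^3)(\alpha_2^2-\beta_3^2(2\beta_3+3))^2$. Granting this, the conclusion is immediate, since
\begin{equation*}
    c_0 = 256\alpha_2^4\,P^2 = (16\alpha_2^2)^2\,P^2 = \left[16\alpha_2^2\,P\right]^2 = \left[\RESULTANT(\A\x^2)\right]^2.
\end{equation*}

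The only real obstacle is the verification of that term-by-term identity: both formulas were generated by symbolic computation, so the check amounts to confirming that the large bracketed factor squared inside $c_0$ coincides, monomial for monomial, with the polynomial $P$ factored out of the resultant. This is routine but delicate bookkeeping, and since the outer prefactors already match as $256\alpha_2^4=(16\alpha_2^2)^2$, the entire content of the corollary is encoded in this single polynomial coincidence. (Note that, strictly, the constant term of $\phi_{\A}(\lambda)=(\lambda^2-1)\sum_{i=0}^6 c_{2i}\lambda^{2i}$ is $-c_0$; the corollary is an assertion about the coefficient $c_0$ in this factored form, which is the natural quantity to compare with the resultant.)

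A more conceptual justification, which I would keep in reserve, proceeds from the definition of $\phi_{\A}(\lambda)$ as the resultant of system \eqref{poly-system}. Setting $\lambda=0$ there decouples the sphere equation from the three quadrics $\A\x^2=\vt{0}$, so that $c_0$ is governed by $\RESULTANT(\A\x^2)$; the square then arises from the two-fold multiplicity induced by the homogenizing variable $x_0$ together with the quadratic sphere constraint (cf.\ the invariance results of \cite{LQZ-13}). Given that both explicit polynomials are already in hand, however, the direct comparison above is the cleanest route.
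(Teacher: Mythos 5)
Your proposal is correct and matches the paper's own argument: the paper likewise obtains the corollary by directly comparing the explicit expression for $c_0$ in Theorem~\ref{Th-E_char} with the formula \eqref{Result-Ax2}, noting that $c_0 = 256\alpha_2^4 P^2 = \bigl(16\alpha_2^2 P\bigr)^2$ for the common inner polynomial $P$ (the paper's text cites ``$c_{12}$'' here, but that is evidently a typo for $c_0$). Your parenthetical remark about the sign of the constant term in the factored form $(\lambda^2-1)\sum_{i} c_{2i}\lambda^{2i}$ is a fair observation of a subtlety the paper glosses over, but it does not change the substance.
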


This corollary is in agreement with Theorem 3.5 of \cite{LQZ-13} in this case.

\begin{Theorem}\label{Th-e-charPoly}
  Suppose that $\A\x^2=\vt{0}$ has only the trivial solution $\x=\vt{0}$.
  Then, the system \eqref{poly-system} has a non-trivial solution $(x_0,\ldots,x_3)$ over $\COMPLEX$
  if, and only if, $\phi_{\A}(\lambda)=0$.
\end{Theorem}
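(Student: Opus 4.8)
The plan is to recognize $\phi_{\A}(\lambda)$ as the multipolynomial resultant of the four homogeneous quadratic forms $G_0,G_1,G_2,G_3$ appearing on the left-hand sides of \eqref{poly-system}, regarded as forms in the four variables $x_0,x_1,x_2,x_3$, and then to apply the same resultant property already invoked in Theorem~\ref{Th-Ax^2}: for $n+1$ homogeneous polynomials in $n+1$ variables over $\COMPLEX$, the resultant vanishes if and only if the forms share a common zero away from the origin. Each $G_i$ is genuinely of degree two for every value of $\alpha_0,\beta_3,\alpha_2,\lambda$, so $\RESULTANT(G_0,G_1,G_2,G_3)$ is well defined, and the theorem reduces to identifying $\phi_{\A}(\lambda)$ with this resultant and translating ``common projective zero of the $G_i$'' into ``nontrivial solution of \eqref{poly-system}.''

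One implication is then immediate and needs no hypothesis: if \eqref{poly-system} has a nonzero solution $(x_0,\ldots,x_3)\in\COMPLEX^4$, the four forms have a common zero, hence $\RESULTANT(G_0,G_1,G_2,G_3)=0$ and therefore $\phi_{\A}(\lambda)=0$. The reverse implication is where the hypothesis $\A\x^2=\vt{0}\Rightarrow\x=\vt{0}$ is needed. The point is that the expression for $\phi_{\A}$ produced by Mathematica, exactly as the determinant $\det D$ of Section~\ref{sec:3} did for $\RESULTANT(\A\x^2)$, may carry an extraneous factor, so that $\phi_{\A}(\lambda)=0$ forces $\RESULTANT(G_0,G_1,G_2,G_3)=0$ only once that factor is known to be nonzero. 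I would therefore isolate the extraneous factor and argue that it does not vanish under the hypothesis.

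The mechanism behind this is the behaviour of \eqref{poly-system} at infinity. Putting $x_0=0$ collapses the last three equations to $\A\x^2=\vt{0}$ and the first to $x_1^2+x_2^2+x_3^2=0$; a solution with $x_0=0$ would thus require a nonzero $(x_1,x_2,x_3)$ annihilated by $\A\x^2$, which the hypothesis forbids. Such a point, were it to exist, would solve \eqref{poly-system} for \emph{every} $\lambda$ and so force $\phi_{\A}$ to vanish identically; ruling it out is exactly what keeps $\phi_{\A}$ from degenerating. By Theorem~\ref{Th-Ax^2} the hypothesis is equivalent to $\RESULTANT(\A\x^2)\neq0$, and the preceding corollary $c_0=[\RESULTANT(\A\x^2)]^2$ then shows that the constant term of $\phi_{\A}$ is nonzero, so $\phi_{\A}\not\equiv0$ and in particular $\lambda=0$ is not a spurious root. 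With the $x_0=0$ locus excluded, every zero of the resultant comes from an honest solution, as desired.

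The main obstacle I anticipate is not conceptual but one of bookkeeping and certification: verifying that the explicit polynomial of Theorem~\ref{Th-E_char} is indeed $\RESULTANT(G_0,G_1,G_2,G_3)$ up to a factor that the hypothesis keeps nonzero, rather than a proper divisor or a product hiding a $\lambda$-dependent spurious factor. Concretely, I would exhibit the extraneous factor as the determinant of an explicit submatrix, in parallel with the role of $\det D'$ in Section~\ref{sec:3}, and check its nonvanishing; a consistency cross-check is provided by the degree of $\phi_{\A}$ in $\lambda$ and by the $\lambda\mapsto-\lambda$ symmetry forced by the odd order of $\A$, both of which must match the expected structure of the E-characteristic polynomial. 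Once this accounting is settled, the claimed equivalence is a direct consequence of the resultant property.
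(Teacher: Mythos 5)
Your proposal is correct and follows essentially the same route as the paper: the paper's own proof likewise observes that the hypothesis forces the system $\A\x^2=\vt{0}$, $\x^\T\x=0$ (precisely the $x_0=0$ locus of \eqref{poly-system} that you analyze) to have only the zero solution, and then invokes the fundamental vanishing property of the multipolynomial resultant (Proposition 2.6 in Chapter 3 of \cite{CLO-04}), which is exactly the mechanism you spell out. Your remaining worry about certifying the explicitly computed polynomial is not an issue for this theorem in the paper's logical structure, since $\phi_{\A}$ is \emph{defined} as the resultant of \eqref{poly-system}, with Theorem~\ref{Th-E_char} merely supplying its closed form.
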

\begin{proof}
  Because $\A\x^2=\vt{0}$ has only the trivial solution $\x=\vt{0}$, we know that the system
  of $\A\x^2 = \vt{0}$ and $\x^\T\x = 0$ has only a zero solution.
  By Proposition 2.6 in Chapter 3 of \cite{CLO-04}, we obtain the desired conclusion.
\end{proof}

\begin{Corollary}
  If $\RESULTANT(\A\x^2)\neq 0$, then all E-eigenvalues of the octupolar tensor $\A$ are non-zero.
\end{Corollary}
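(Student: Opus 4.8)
The plan is to argue by contraposition, showing that a vanishing E-eigenvalue forces $\RESULTANT(\A\x^2)=0$. Recall that $\lambda$ is an E-eigenvalue of $\A$ precisely when the system $\A\x^2=\lambda\x$, $\x^\T\x=1$ admits a solution $\x$. Specializing to $\lambda=0$, a zero E-eigenvalue of $\A$ would supply a vector $\x$ with $\x^\T\x=1$ and $\A\x^2=\vt{0}$. First I would observe that the normalization $\x^\T\x=1$ already forces $\x\neq\vt{0}$, so such an $\x$ is a genuine non-trivial solution of $\A\x^2=\vt{0}$.

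With that in hand, I would invoke Theorem~\ref{Th-Ax^2}, which asserts that $\A\x^2=\vt{0}$ possesses a non-zero solution if, and only if, $\RESULTANT(\A\x^2)=0$. Thus the existence of the vector $\x$ above yields $\RESULTANT(\A\x^2)=0$, exactly the negation of the hypothesis. The contrapositive then delivers the claim: if $\RESULTANT(\A\x^2)\neq 0$, then $0$ cannot be an E-eigenvalue, i.e., every E-eigenvalue of $\A$ is non-zero.

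An equivalent route works directly with the E-characteristic polynomial. Assuming $\RESULTANT(\A\x^2)\neq 0$, Theorem~\ref{Th-Ax^2} ensures that $\A\x^2=\vt{0}$ has only the trivial solution, so Theorem~\ref{Th-e-charPoly} applies and the E-eigenvalues of $\A$ are exactly the roots of $\phi_{\A}(\lambda)$. By the preceding corollary the constant term is $c_0=[\RESULTANT(\A\x^2)]^2$, whence $\phi_{\A}(0)=c_0\neq 0$; therefore $\lambda=0$ is not a root, hence not an E-eigenvalue. I would include this second argument as a cross-check, since it makes explicit use of the identity $c_0=[\RESULTANT(\A\x^2)]^2$ just established.

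I do not anticipate a substantive obstacle, the statement being an immediate consequence of Theorem~\ref{Th-Ax^2}. The only point that deserves care is the bookkeeping around the definition of E-eigenvalues over $\COMPLEX$: Theorem~\ref{Th-Ax^2} speaks of arbitrary non-zero solutions of $\A\x^2=\vt{0}$, whereas an E-eigenvector is constrained by $\x^\T\x=1$. The contraposition is clean precisely because it runs in the convenient direction, a normalized E-eigenvector being automatically non-zero, so no rescaling and no discussion of isotropic vectors ($\x^\T\x=0$) is needed. This is why I would present the contrapositive as the main proof and relegate the polynomial computation to a confirmatory remark.
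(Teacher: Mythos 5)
Your proof is correct, and your primary argument takes a genuinely different route from the paper's. The paper proves the corollary exactly as in your confirmatory remark: from $\RESULTANT(\A\x^2)\neq 0$ and Theorem~\ref{Th-Ax^2} it concludes that $\A\x^2=\vt{0}$ has only the trivial solution, so that Theorem~\ref{Th-e-charPoly} applies; it then evaluates $\phi_{\A}(0)=[\RESULTANT(\A\x^2)]^2>0$ and concludes that $\lambda=0$ is not an E-eigenvalue. Your contrapositive argument is more elementary: an E-eigenvector associated with $\lambda=0$ satisfies $\x^\T\x=1$, hence is non-zero, and $\A\x^2=\vt{0}$; Theorem~\ref{Th-Ax^2} then forces $\RESULTANT(\A\x^2)=0$. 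This bypasses the E-characteristic polynomial, Theorem~\ref{Th-e-charPoly}, and the constant-term identity altogether, and, as you observe, it runs in the direction where no rescaling and no discussion of isotropic vectors ($\x^\T\x=0$) is needed --- whereas the paper's route must invoke Theorem~\ref{Th-Ax^2} precisely to exclude isotropic kernel vectors before Theorem~\ref{Th-e-charPoly} can identify E-eigenvalues with roots of $\phi_{\A}$. What the paper's route buys in exchange is thematic coherence: it exercises the machinery just built in that section, namely the E-characteristic polynomial and the identity $c_0=[\RESULTANT(\A\x^2)]^2$, illustrating how spectral information is read off from $\phi_{\A}$. Either argument stands on its own as a complete proof.
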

\begin{proof}
  Since $\RESULTANT(\A\x^2)\neq 0$, the system $\A\x^2=\vt{0}$ has only
  the trivial solution by Theorem \ref{Th-Ax^2}.
  Let us compute $\phi_{\A}(0)=[\RESULTANT(\A\x^2)]^2 >0$. By Theorem \ref{Th-e-charPoly},
  $\lambda=0$ is not an E-eigenvalue of $\A$.
\end{proof}

\section{Dome: the reduced admissible region}\label{sec:4}

Assume that the North pole $(0,0,1)^\T$ is the global maximum point of
the octupolar potential $\Phi(\x)$ on the unit sphere $\SPHERE$.
That is, we assume that the octupolar tensor $\A$ has the largest Z-eigenvalue
$\lambda=1$ with an associated Z-eigenvector $(0,0,1)^\T$.
In the admissible region \eqref{Admissive-region}, there is a reduced region
such that the maximal Z-eigenvalue of $\A$ is $1$.
The boundary of this reduced admissible region is called the \emph{dome} \cite{GaV-16}:
its apex is at $\alpha_0=0,\beta_3=-\frac{1}{2},\alpha_2=\frac{\sqrt{2}}{2}$,
and it meets the plane $\alpha_2=0$ along the circle $\alpha_0^2+\beta_3^2+\beta_3=0$.

Now, we are in a position to give an explicit formula for the dome.
We consider the E-characteristic polynomial $\phi_{\A}(\lambda)$ in Theorem \ref{Th-E_char}.
Clearly, $\lambda=1$ is a root of $\phi_{\A}(\lambda)$.
Since the dome is the locus where the maximal Z-eigenvalue is $\lambda=1$,
we substitute $\lambda=1$ into $\phi_{\A}(\lambda)/(\lambda^2-1)=0$
and we obtain the following equation
\begin{equation}\label{dome-11}
    c_1(\alpha_0,\beta_3,\alpha_2)^3 \cdot c_2(\alpha_0,\beta_3,\alpha_2) \cdot c_3(\alpha_0,\beta_3,\alpha_2) = 0,
\end{equation}
where
\begin{equation*}
    c_1(\alpha_0,\beta_3,\alpha_2) = 3 - 4\alpha_0^2 - 4\beta_3^2 - 4\beta_3,
\end{equation*}
\begin{equation*}
    c_2(\alpha_0,\beta_3,\alpha_2) = 64\alpha_2^4 - 16\alpha_2^2(1+2\beta_3)(-12\alpha_0^2+(1+2\beta_3)^2)
      + (4\alpha_0^2+(1+2\beta_3)^2)^3,
\end{equation*}
and
\begin{eqnarray}\label{dome-c3}
    \lefteqn{c_3(\alpha_0,\beta_3,\alpha_2) = \alpha_2^6(2\beta_3-1)((2\beta_3+5)^2-12\alpha_0^2)
      +\alpha_2^4(-48\alpha_0^4(3\beta_3^2-1)
        +12\alpha_0^2(8\beta_3^4}  \nonumber\\
   &&{} +24\beta_3^3+26\beta_3^2-4\beta_3-11)-16\beta_3^6-96\beta_3^5-168\beta_3^4-
     72\beta_3^3-21\beta_3^2-24\beta_3+40) \nonumber\\
   &&{} +8\alpha_2^2(8\alpha_0^6+6\alpha_0^4(4\beta_3^2-2\beta_3-5)
     +3\alpha_0^2(8\beta_3^4+8\beta_3^3-12\beta_3^2-3\beta_3+6)+8\beta_3^6 \nonumber\\
   &&{} +36\beta_3^5+42\beta_3^4+3\beta_3^3-9\beta_3^2-2)
   -16(\alpha_0^2+\beta_3^2+\beta_3)^2(4\alpha_0^2+4\beta_3^2+4\beta_3-3).
\end{eqnarray}
Because of \eqref{Admissive-region}, we know that $c_1(\alpha_0,\beta_3,\alpha_2)\geq0$
and the equality holds on the boundary of the admissible region.
Hence, $c_1(\alpha_0,\beta_3,\alpha_2)=0$ is a trivial solution of \eqref{dome-11}.

As for $c_2(\alpha_0,\beta_3,\alpha_2)$, this is a quadratic function in $\alpha_2^2$ which attains its minimum value $4\alpha_0^2(4\alpha_0^2-3(1+2\beta_3)^2)^2\geq0$.
If $\alpha_0=0$, then $c_2(\alpha_0,\beta_3,\alpha_2)=(-8\alpha_2^2+(1+2\beta_3)^3)^2$. Hence,
when
\begin{equation}\label{dome-line-one}
    \alpha_0=0 \qquad\text{ and }\qquad
    8\alpha_2^2 - (1+2\beta_3)^3 = 0,
\end{equation}
we have $c_2(\alpha_0,\beta_3,\alpha_2)=0$.
If $4\alpha_0^2-3(1+2\beta_3)^2=0$, then $c_2=64(\alpha_2^2+(1+2\beta_3)^3)^2$.
Hence, when
\begin{equation}\label{dome-line-two}
    4\alpha_0^2-3(1+2\beta_3)^2=0 \qquad\text{ and }\qquad
    \alpha_2^2 + (1+2\beta_3)^3=0,
\end{equation}
we also have $c_2(\alpha_0,\beta_3,\alpha_2)=0$.
However, under either \eqref{dome-line-one} or \eqref{dome-line-two},
there are two E-eigenvectors corresponding to the E-eigenvalue $1$.
The first one is the North pole $(0,0,1)^\T$
and the other one is always a complex vector according to  by our direct computations. Hence, we omit these two lines.

Then, we turn attention to the equation $c_3(\alpha_0,\beta_3,\alpha_2)=0$, which
 has multiple roots in $\alpha_2$ for fixed $\alpha_0$ and $\beta_3$.
For example, when $\alpha_0=0$ and $\beta_3=-0.8$,
$\alpha_2^{(1)}=\alpha_2^{(2)}=\frac{2}{\sqrt{17}}\approx0.4851$
and $\alpha_2^{(3)}=\frac{4\sqrt{7}}{5\sqrt{5}}\approx0.9466$ are roots of the equation, whereas
when $\alpha_0=0.1$ and $\beta_3=-0.8$, the roots are $\alpha_2^{(1)}=0.3765$, $\alpha_2^{(2)}=0.5862$,
and $\alpha_2^{(3)}=9459$.
Which value of $\alpha_2$ then describes the dome?
If the largest Z-eigenvalue of $\A(\alpha_0,\beta_3,\alpha_2)$ is larger that $1$, then
the triple $(\alpha_0,\beta_3,\alpha_2)$ is above the dome.
%Positive Z-eigenvalues and associated Z-eigenvectors of just mentioned tensors $\A(\alpha_0,\beta_3,\alpha_2)$
%are computed numerically and listed in Tables \ref{Tab.-dome-1}-\ref{Tab.-dome-5}.
%Inspired by these tables, we claim that,
By direct numerical explorations, we found that, for given $\alpha_0$ and $\beta_3$,
the dome lies on the smallest non-negative value of $\alpha_2$ such that $c_3(\alpha_0,\beta_3,\alpha_2)=0$, i.e.,
\begin{equation}\label{formula-dome}
    \alpha_2(\alpha_0,\beta_3) = \min\{\tilde{\alpha}_2\geq 0: c_3(\alpha_0,\beta_3,\tilde{\alpha}_2)=0\}
    \qquad\text{ for }\alpha_0^2+\beta_3^2+\beta_3 \leq 0.
\end{equation}
The contour profile of the dome as given by \eqref{formula-dome} is illustrated in Figure \ref{Fig-dome}.
\begin{figure}[!btp]
  \centering\includegraphics[width=.7\textwidth]{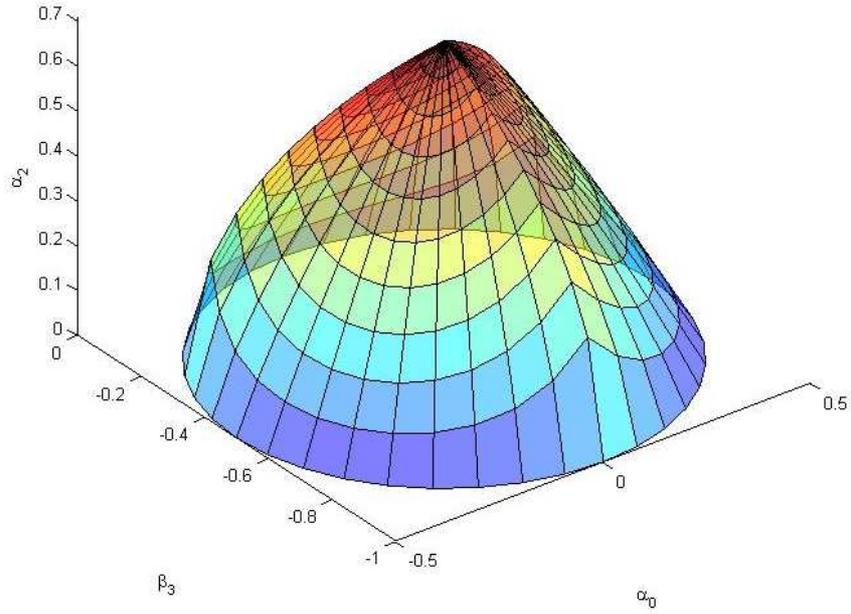}
  \caption{The dome that bounds the reduced admissible region as represented by \eqref{formula-dome}.}\label{Fig-dome}
\end{figure}

Finally, we say more on the apex and the base of the dome.
At the apex of the dome $(\alpha_0,\beta_3,\alpha_2)=(0,-\tfrac{1}{2},\frac{\sqrt{2}}{2})$ and
the E-characteristic polynomial of the octupolar tensor $\A(0,-\tfrac{1}{2},\frac{\sqrt{2}}{2})$ is
$%\begin{equation*}
    \phi_{\A}(\lambda) = 19683 \lambda^6 (\lambda^2-1)^4.
$%\end{equation*}
There is a quadruple root $\lambda^2=1$ of such a tensor corresponding to
four Z-eigenvectors:
\begin{equation*}
    \x^{(1)}=(0,0,1)^\T, ~~\x^{(2)}=\left(0,\tfrac{2\sqrt{2}}{3},-\tfrac{1}{3}\right)^\T,~~
    \x^{(3)}=\left(\tfrac{\sqrt{6}}{3},-\tfrac{\sqrt{2}}{3},-\tfrac{1}{3}\right)^\T, ~~
    \x^{(4)}=\left(-\tfrac{\sqrt{6}}{3},-\tfrac{\sqrt{2}}{3},-\tfrac{1}{3}\right)^\T.
\end{equation*}
The polar plots of the octupolar potential $\Phi(\x)$
is illustrated in Figure \ref{Fig-poten}(a).
\begin{figure}[!bt]
\begin{center}
\begin{tabular}{ccc}
  \includegraphics[width=.45\textwidth]{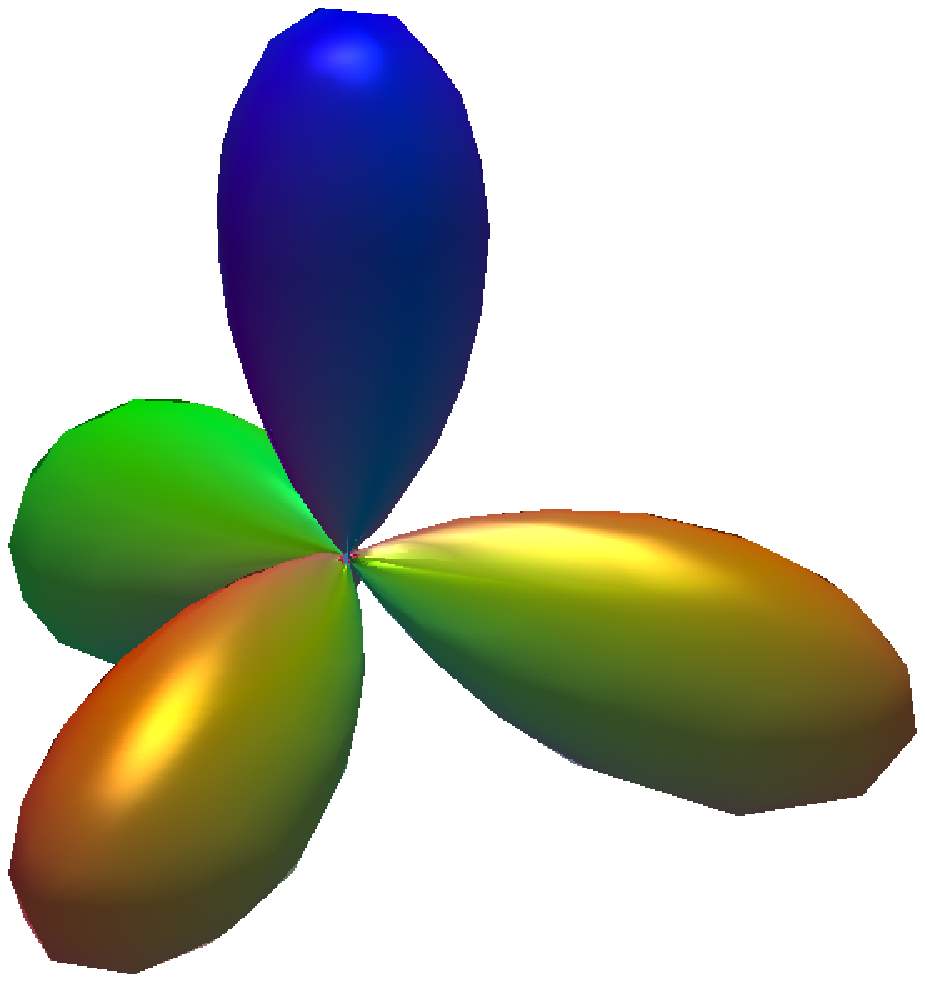} & &
  \includegraphics[width=.45\textwidth]{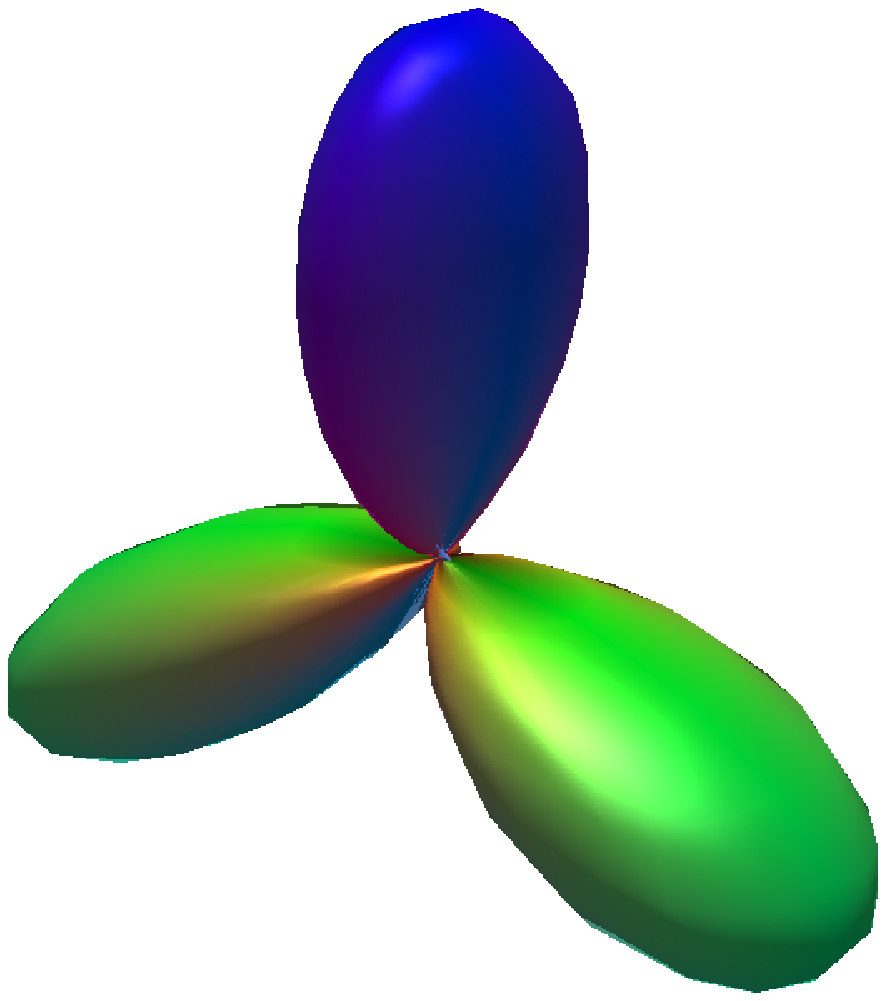} \\
  (a) $\A(0,-\tfrac{1}{2},\frac{\sqrt{2}}{2})$ & \qquad & (b) $\A(0,0,0)$  \\
\end{tabular}
\end{center}
   %\centering\includegraphics[width=.92\textwidth]{poten.eps}
  \caption{The two typical polar plots of the octupolar potential.}\label{Fig-poten}
\end{figure}

At the base of the dome, $\alpha_2=0$ and $\alpha_0^2+\beta_3^2+\beta_3=0$, which represents a circle of center in $\alpha_0=0$, $\beta_3=\frac12$ and radius $\frac12$; there,
the E-characteristic polynomial reduces to
$\phi_{\A}(\lambda)= -64\lambda^8(\lambda^2-1)^6$.
Hence, $\lambda^2=1$ is a triple root of $\phi_{\A}(\lambda)$.
Specifically, $\A(0,0,0)$ has four Z-eigenvectors, namely,
\begin{equation*}
    \x^{(1)}=(0,0,1)^\T, ~~\x^{(2)}=\left(0,\tfrac{\sqrt{3}}{2},-\tfrac{1}{2}\right)^\T,~~
    \x^{(3)}=\left(0,-\tfrac{\sqrt{3}}{2},-\tfrac{1}{2}\right)^\T.
\end{equation*}
corresponding to the positive Z-eigenvalue $\lambda=1$. In this case,
the polar plot of the octupolar potential $\Phi(\x)$
is illustrated in Figure \ref{Fig-poten}(b).
The two plots in Figure~\ref{Fig-poten} illustrate the typical appearance of the octupolar potential in the two generic orientational states of generalized liquid crystals described  by an octupolar order tensor.

\section{Separatrix}\label{sec:5}

Gaeta and Virga \cite{GaV-16} showed by numerical continuation that there is a separatrix surface between
the two different generic states of the octupolar potential $\Phi$:
in one generic state, $\Phi$ has four maxima and three (positive) saddles; in the other generic state,
$\Phi$ has three maxima and two (positive) saddles.\footnote{Since $\Phi(\mathbf{x})$ is odd in $\mathbf{x}$, both maxima (and positive saddles) of the octupolar potential are accompanied by an equal number of minima (and negative saddles) in the antipodal positions on the unit sphere.}
Here, we determine explicitly the separatrix.

We recall the spherical optimization problem \eqref{sph-opt}.
When passing through the separatrix, one maxima $\lambda=\Phi(\x)$ of
the octupolar potential $\Phi(\x)$ on the unit sphere $\SPHERE$ disappears.
Hence, the Hessian $\nabla_{\x\x}^2L(\x,\lambda)$ in \eqref{Hessian} of
the Lagrangian \eqref{sph-opt} is singular on the tangent space $\x^{\bot}$,
i.e., the projected Hessian $P^\T \nabla_{\x\x}^2L(\x,\lambda) P$
in \eqref{cccccccc} has two zero eigenvalues. Clearly, $\mu_1=0$ is an eigenvalue of
the projected Hessian \eqref{cccccccc} with the associated eigenvector $\x$.
Suppose that $\mu_2$ and $\mu_3$ are the other eigenvalues of the projected Hessian \eqref{cccccccc}.
Let $\sigma \equiv \mu_2\mu_3 = \mu_1\mu_2+\mu_1\mu_3+\mu_2\mu_3$.
Then, by linear algebra, $\sigma$ equals the sum of all $2$-by-$2$ principal minors of the projected Hessian.
Using \eqref{cccccccc}, we have that
\begin{eqnarray}\label{sepa-2}
  \sigma = 7\lambda^2 -4\left((\alpha_0^2+\alpha_2^2+\beta_3^2)x_1^2
    + (\alpha_0^2+\alpha_2^2+(\beta_3+1)^2)x_2^2
    + (\alpha_0^2+\beta_3^2+\beta_3+1)x_3^2\right.  \nonumber\\
  {} \left.- 2\alpha_0x_1x_2 - 2\alpha_0\alpha_2x_1x_3 - \alpha_2(2\beta_3+1)x_2x_3\right) = 0.
\end{eqnarray}
Moreover, if $\lambda\neq0$ and $\x\neq0$ satisfy $\A\x^2=\lambda\x$, then
$\frac{\lambda}{\|\x\|}$ and $\frac{1}{\|\x\|}\x$ satisfy \eqref{aaaaaa}.
Hence, we could omit the spherical constraint $\x^\T\x=1$ temporarily
and just consider the system of homogeneous polynomial equations \eqref{sepa-2} and
\begin{equation}\label{sepa-1}
\left\{\begin{array}{rcl}
   -2\alpha_2x_1x_2 +2\beta_3x_1x_3 +2\alpha_0x_2x_3               - \lambda x_1 &=& 0, \\
   -\alpha_2x_1^2+\alpha_2x_2^2+2\alpha_0x_1x_3-2(1+\beta_3)x_2x_3 - \lambda x_2 &=& 0, \\
   \beta_3x_1^2 -(1+\beta_3)x_2^2 +x_3^2 +2\alpha_0x_1x_2          - \lambda x_3 &=& 0.
\end{array}\right.
\end{equation}

Using the approach introduced in Section 3, we obtain the resultant of \eqref{sepa-2} and \eqref{sepa-1}
\begin{equation}\label{separatrix}
    \mathrm{Separatrix} : 1792(4\alpha_0^2+4\beta_3^2+4\beta_3-3)^2\sum_{i=0}^{8}d_{2i}(\alpha_0,\beta_3)\alpha_2^{2i}=0,
\end{equation}
where
\begin{eqnarray*}
  d_{16} &=& 27(-16\alpha_0^4-8\alpha_0^2(4\beta_3^2-44\beta_3+13)-(2\beta_3-1)(2\beta_3+7)^3), \\
  d_{14} &=& -54(128\alpha_0^6-16\alpha_0^4(48\beta_3^2+78\beta_3-29)
      +16\alpha_0^2(72\beta_3^4+124\beta_3^3+190\beta_3^2-101\beta_3-69) \\
      &&{}+(2\beta_3+7)^2(40\beta_3^3+44\beta_3^2+62\beta_3-47)), \\
  d_{12} &=& -9(4096\alpha_0^8-128\alpha_0^6(277\beta_3^2-92\beta_3-55)
      +48\alpha_0^4(152\beta_3^4+1944\beta_3^3-7094\beta_3^2-1548\beta_3 \\
      &&{} +53)  +8\alpha_0^2(5648\beta_3^6+18912\beta_3^5+60408\beta_3^4+115368\beta_3^3
      +86625\beta_3^2-44964\beta_3-18410) \\
      &&{} -1664\beta_3^8-22400\beta_3^7-124064\beta_3^6-377088\beta_3^5-624840\beta_3^4
      -383256\beta_3^3+109994\beta_3^2 \\
      &&{} +181940\beta_3-17605), \\
  d_{10} &=& -2(22528\alpha_0^{10}+256\alpha_0^8(800\beta_3^2+3620\beta_3+599)
      +64\alpha_0^6(5440\beta_3^4-195290\beta_3^3-97221\beta_3^2 \\
      &&{} -44476\beta_3+8375)
      +16\alpha_0^4(12800\beta_3^6+1073640\beta_3^5+2832444\beta_3^4+2369838\beta_3^3
      -242151\beta_3^2 \\
      &&{} -492540\beta_3-270455)
      +4\alpha_0^2(17920\beta_3^8-1188320\beta_3^7-6499376\beta_3^6-13648368\beta_3^5 \\
      &&{} -10198728\beta_3^4+1289514\beta_3^3+3579185\beta_3^2+123260\beta_3+206555)
      +32768\beta_3^{10}+483200\beta_3^9 \\
      &&{} +3111744\beta_3^8+10647136\beta_3^7
      +19890064\beta_3^6+19640424\beta_3^5+5479324\beta_3^4-6109790\beta_3^3 \\
      &&{} -3422445\beta_3^2+504920\beta_3+3560), \\
\end{eqnarray*}
\begin{eqnarray*}
  d_{8} &=& 5(40960\alpha_0^{12}-12288\alpha_0^{10}(97\beta_3^2+88\beta_3+44)
      +256\alpha_0^8(39921\beta_3^4+34176\beta_3^3+42870\beta_3^2 \\
      &&{} +12132\beta_3+1667)
      -128\alpha_0^6(141080\beta_3^6+419208\beta_3^5+389430\beta_3^4+82228\beta_3^3
      -15613\beta_3^2 \\
      &&{} -100402\beta_3-37063)
      +48\alpha_0^4(212768\beta_3^8+1023680\beta_3^7+1963504\beta_3^6+1378192\beta_3^5 \\
      &&{} -304390\beta_3^4-508976\beta_3^3+63582\beta_3^2-57076\beta_3-52349)
      -8\alpha_0^2(149376\beta_3^{10}+1199488\beta_3^9 \\
      &&{} +4718496\beta_3^8+9599232\beta_3^7
      +9822584\beta_3^6+3227448\beta_3^5-2548818\beta_3^4-2029036\beta_3^3 \\
      &&{} -53961\beta_3^2
      +37902\beta_3-40196)+(2\beta_3+1)^2(10304\beta_3^{10}+154304\beta_3^9+911472\beta_3^8 \\
      &&{} +2786464\beta_3^7+4828732\beta_3^6+3895212\beta_3^5+22345\beta_3^4-1558688\beta_3^3
      -352512\beta_3^2 \\
      &&{} +133184\beta_3-7840)), \\
  d_{6} &=& 16 (28672 \alpha _0^{14}-512 \alpha _0^{12} (688 \beta _3^2+1102 \beta _3+941)-128 \alpha
   _0^{10} (9696 \beta _3^4-40380 \beta _3^3-33951 \beta _3^2 \\&&{} -20148 \beta
   _3-11743)-160 \alpha _0^8 (5632 \beta _3^6-26064 \beta _3^5-7644 \beta
   _3^4+35134 \beta _3^3-57181 \beta _3^2 \\&&{} -51958 \beta _3-5454)+40 \alpha _0^6
   (18944 \beta _3^8-103552 \beta _3^7-737312 \beta _3^6-1217152 \beta
   _3^5-320576 \beta _3^4 \\&&{} +504962 \beta _3^3+120149 \beta _3^2-112824 \beta
   _3-29175)+2 \alpha _0^4 (577536 \beta _3^{10}+1111040 \beta
   _3^9-2474880 \beta _3^8 \\&&{} -6705600 \beta _3^7-341600 \beta _3^6+9137976 \beta
   _3^5+5840100 \beta _3^4-884330 \beta _3^3-684765 \beta _3^2+374580 \beta
   _3 \\&&{} +132449)+\alpha _0^2 (2 \beta _3+1){}^2 (80896 \beta
   _3^{10}+999808 \beta _3^9+3452640 \beta _3^8+5398208 \beta _3^7+3717992 \beta
   _3^6 \\&&{} -367068 \beta _3^5-2064016 \beta _3^4-746875 \beta _3^3+150774 \beta
   _3^2+30796 \beta _3-25928)-(2 \beta
   _3+1){}^4 (2048 \beta _3^{10} \\&&{} +25824 \beta _3^9+135752 \beta _3^8+385692
   \beta _3^7+535154 \beta _3^6+253167 \beta _3^5-114083 \beta _3^4-118464 \beta
   _3^3 \\&&{} -4364 \beta _3^2+7632 \beta _3-656)), \\
  d_4 &=& 16 (-32768 \alpha _0^{16}+ 2048 \alpha _0^{14} (241 \beta _3^2-284 \beta _3-83)+256 \alpha
   _0^{12} (9208 \beta _3^4-5384 \beta _3^3+6390 \beta _3^2 \\&&{} +25496 \beta
   _3+8589)+128 \alpha _0^{10} (25680 \beta _3^6-32160 \beta _3^5+7368
   \beta _3^4+257000 \beta _3^3+212292 \beta _3^2 \\&&{} +23286 \beta _3-10209)+80
   \alpha _0^8 (8064 \beta _3^8-169344 \beta _3^7-304224 \beta _3^6+311872 \beta
   _3^5+774736 \beta _3^4 \\&&{} +306928 \beta _3^3-61620 \beta _3^2-34824 \beta
   _3+1209)-8 \alpha _0^6 (281856 \beta _3^{10}+2529280 \beta _3^9+6835200
   \beta _3^8 \\&&{} +6572800 \beta _3^7+316800 \beta _3^6-2303424 \beta _3^5-174400 \beta
   _3^4+593920 \beta _3^3+124725 \beta _3^2-2310 \beta _3 \\&&{} +6019)-2 \alpha _0^4
   (2 \beta _3+1){}^2 (230144 \beta _3^{10}+1285120 \beta
   _3^9+3244032 \beta _3^8+4304128 \beta _3^7+2583584 \beta _3^6 \\&&{} +28128 \beta
   _3^5-669240 \beta _3^4-261024 \beta _3^3+369 \beta _3^2+30952 \beta
   _3-3232)-4 \alpha _0^2 (2 \beta _3+1){}^4 (5408 \beta
   _3^{10} \\&&{} +10240 \beta _3^9-22272 \beta _3^8-72224 \beta _3^7-83578 \beta _3^6-75384
   \beta _3^5-40635 \beta _3^4+8889 \beta _3^3+10338 \beta _3^2 \\&&{} -2444 \beta
   _3-184)+2 (\beta _3+1){}^2 (2 \beta
   _3+1){}^6 (400 \beta _3^8+4000 \beta _3^7+15408 \beta _3^6+16240 \beta
   _3^5-2449 \beta _3^4 \\&&{} -6128 \beta _3^3+104 \beta _3^2+272 \beta _3-24)), \\
  d_2 &=& -256 (\alpha _0^2+\beta _3^2+\beta _3){}^2 (4 \alpha _0^2+(2
   \beta _3+1){}^2){}^3 (64 \alpha _0^8+8 \alpha _0^6 (32 \beta _3^2-112 \beta
   _3-81)+2 \alpha _0^4 (192 \beta _3^4 \\&&{} -576 \beta _3^3-1356 \beta _3^2-78
   \beta _3+245)+\alpha _0^2 (256 \beta _3^6+384 \beta _3^5-408 \beta
   _3^4-136 \beta _3^3+764 \beta _3^2+215 \beta _3 \\&&{} -62)+(2
   \beta _3+1){}^2 (16 \beta _3^6+144 \beta _3^5+266 \beta _3^4+87 \beta
   _3^3-89 \beta _3^2-32 \beta _3+6)), \\
  d_0 &=& 256 (\alpha _0^2+\beta _3^2+\beta _3){}^4 (4 \alpha _0^2+4 \beta
   _3^2+4 \beta _3-3) (4 \alpha _0^2+(2 \beta
   _3+1){}^2){}^5.
\end{eqnarray*}

\begin{figure}[!btp]
  \centering\includegraphics[width=.7\textwidth]{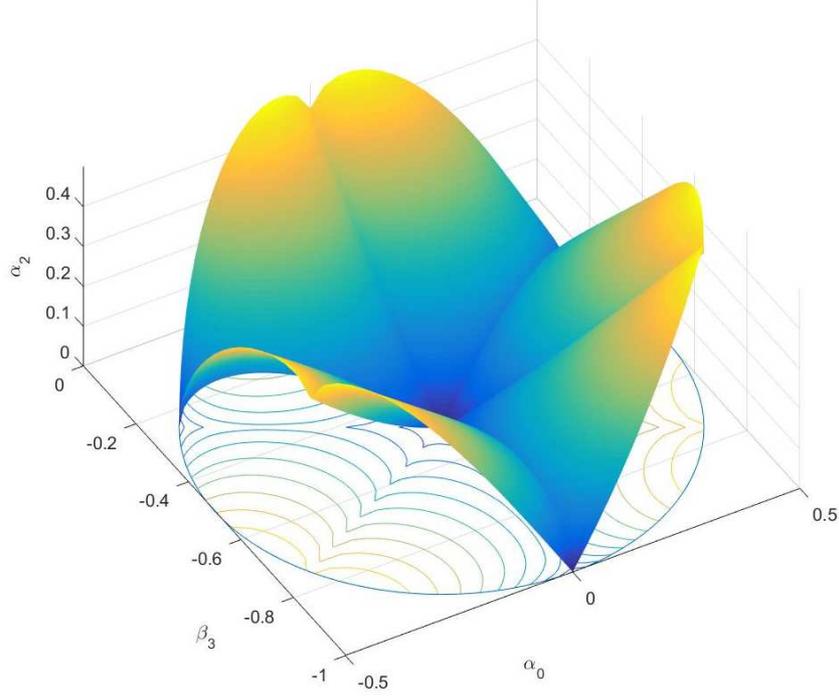}
  \caption{The separatrix below the dome as represented by \eqref{separatrix}.}\label{Fig-sepa}
\end{figure}
Below the dome, the contour plot of the separatrix is illustrated in Figure \ref{Fig-sepa}.   Figure 3 shows a $6$-fold symmetry, which confirms equation (39) of Gaeta and Virga \cite{GaV-16}.
We now contrast  the separatrix and the dome given by \eqref{separatrix} and \eqref{formula-dome} to the same surfaces found numerically in \cite{GaV-16}. To this purpose,
let
\begin{equation}\label{bbbbb}
    \alpha_0 = \rho\cos\chi \qquad\text{ and }\qquad
    \beta_3 = -\tfrac{1}{2} + \rho\sin\chi,
\end{equation}
with $\rho\in[0,\frac{1}{2}]$ and $\chi\in(-\pi,\pi]$.
In Figure \ref{Fig8},
we illustrate the cross-sections for $\chi = -\frac{\pi}{2}, -\frac{5\pi}{12},
-\frac{\pi}{3}, -\frac{\pi}{4}, -\frac{\pi}{6}$ of both the dome and the separatrix, in dash-dot lines and solid lines, respectively.
It can easily be seen that  Figure \ref{Fig8} is consistent with  Figure 8 in \cite{GaV-16}.
\begin{figure}[!btp]
  \centering\includegraphics[width=.6\textwidth]{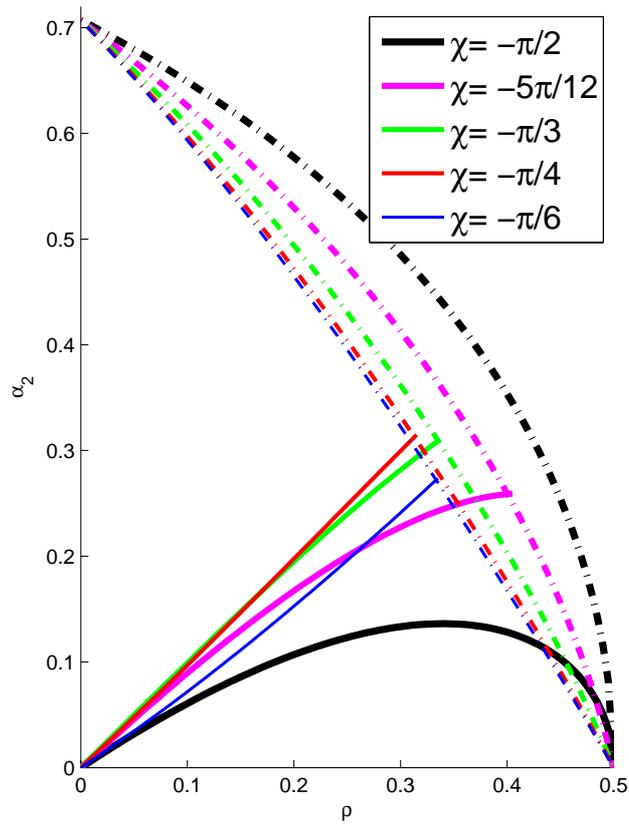}
  \caption{The cross-sections of dome and separatrix corresponding to the cases worked out numerically in \cite{GaV-16}.}\label{Fig8}
\end{figure}

It readily follows from \eqref{bbbbb} that separatix represented by \eqref{separatrix} intersects the plane $\alpha_2=0$ along the circles whose radii are the positive roots of the polynomial
\begin{equation*}
(\rho+1)^3(2\rho+1)^4\rho^{10}(2\rho-1)(\rho-1)^3,
\end{equation*}
the smallest of which identifies the base of the dome. As also shown in \cite{GaV-16}, the octupolar potential possesses a \emph{monkey} saddle when the parameters are chosen on this circle. Apart from the three points shown in Figure~\ref{Fig-sepa} where the separatrix touches this circle, all other points of the latter do not properly belong to the separatrix defined as the locus that separates regions with four and three maxima of the octupolar potential: they will be considered as spurious points, and so discarded in the following.

Next, to reduce \eqref{separatrix} to a simpler form, we study the special case where we set $\chi=-\frac{\pi}{2}$ in \eqref{bbbbb}, so as to describe a cross-section of the separatrix that reaches the base of the dome.
The equation $c_3(\alpha_0,\beta_3,\alpha_2)=0$ for the dome reduces to
\begin{equation*}
    -4(\rho+1)(2\alpha_2^2 -(1-\rho)(1+2\rho)^2)(\alpha_2^2 (\rho-2)-2\rho+1)^2 = 0.
\end{equation*}
Clearly, $\rho+1>0$.
Because $(1-\rho)(1+2\rho)^2$ is monotonically increasing in $\rho\in[0,\frac{1}{2}]$,
we get $(1-\rho)(1+2\rho)^2 \geq 1$. Hence, by $2\alpha_2^2 -(1-\rho)(1+2\rho)^2=0$ and $\alpha_2\geq0$,
we have that the region in parameter space where $\alpha_2 \geq \frac{1}{\sqrt{2}}$ lies above the dome. Moreover,
from $\alpha_2^2 (\rho-2)-2\rho+1=0$, we know that for $\chi=-\frac{\pi}{2}$ the cross-section of the dome is the curve
\begin{equation}\label{dome-I}
    \alpha_2^{(\mathrm{dome})}(\rho) = \sqrt{\frac{1-2\rho}{2-\rho}}.
\end{equation}
For $\chi=-\frac{\pi}{2}$, the separatrix could be rewritten as
\begin{eqnarray*}
   \mathrm{Separatrix} &:& 458752 (1-\rho)^2 (1+\rho)^3
   (\alpha_2^2+4(1-\rho)(1-2\rho)) \left(3(3-\rho)\alpha_2^2-4\rho^2(1-2\rho)\right)^3 \cdot \\
    &&\qquad\qquad{} \left(\alpha_2^4+\alpha_2^2(6\rho+4)+\rho^2(2\rho+1)^2\right) =0.
\end{eqnarray*}
For $\rho\in[0,\frac{1}{2}]$, $(1-\rho)^2 (1+\rho)^3>0$. Moreover,
$\alpha_2^2+4(1-\rho)(1-2\rho)\geq0$ and the equality holds
if, and only if, $(\rho,\alpha_2)=(\frac{1}{2},0)$. Also,
$\alpha_2^4+\alpha_2^2(6\rho+4)+\rho^2(2\rho+1)^2 \geq 0$
and the equality holds if, and only if, $(\rho,\alpha_2)=(0,0)$.
From $3(3-\rho)\alpha_2^2-4\rho^2(1-2\rho)=0$,
we finally obtain that  for $\chi=-\frac{\pi}{2}$ the cross-section of the separatrix is the curve
\begin{equation}\label{sepa-I}
    \alpha_2^{\mathrm{(sepa)}}(\rho) = \frac{2\rho}{\sqrt{3}}\sqrt{\frac{1-2\rho}{3-\rho}}.
\end{equation}
By \eqref{dome-I} and \eqref{sepa-I}, for $\chi=-\frac{\pi}{2}$, the corresponding cross-sections are tangent
at $\rho=\frac{1}{2}$ with vertical tangent.
Since
\begin{equation*}
    [\alpha_2^{\mathrm{(dome)}}(\rho)]^2 - [\alpha_2^{\mathrm{(sepa)}}(\rho)]^2
      = \frac{(3-2\rho)^2(1-\rho-2\rho^2)}{3(2-\rho)(3-\rho)} \geq 0
\end{equation*}
in $\rho\in[0,\frac{1}{2}]$ and the equality holds if, and only if, $\rho=\frac{1}{2}$,  for $\chi=-\frac{\pi}{2}$
the dome  is above the separatrix.

By a similar discussion applied at the case $\chi=-\frac{\pi}{6}$, we also obtain the following curves as representations of the meridian cross-sections of the dome and separatrix, respectively,
\begin{equation}\label{dome-II-and-sepa-II}
    \alpha_2^{(\mathrm{dome})}(\rho) = \frac{1-2\rho}{\sqrt{2}}\sqrt{1+\rho}, \qquad \alpha_2^{\mathrm{(sepa)}}(\rho) = \frac{2\rho}{\sqrt{3}}\sqrt{\frac{1+2\rho}{3+\rho}}.
\end{equation}
They intersect for  $(\rho,\alpha_2)=\left(\frac{1}{3},\frac{\sqrt{2}}{3\sqrt{3}}\right)$.

\section{Conclusion}\label{sec:6}
We studied the octupolar tensor arising from liquid crystal science.
The traceless property of octupolar tensors was shown to be preserved
under orthogonal transformations. The resultant and the E-characteristic
polynomial of the octupolar tensor were constructed explicitly.
Using the resultant theory of algebraic geometry
and the E-characteristic polynomial of the spectral theory of tensors,
we gave an explicit, algebraic expression for the dome and the separatrix, the two significant surfaces for the representation of the octupolar order in three space dimensions. It would be interesting to apply the same algebraic techniques to higher order tensors (or in higher space dimensions) to see whether the pattern of multi-generic states described explicitly in this paper does indeed persist.

\end{document}